\def\colorful{0}
\newif\ifhyper\IfFileExists{hyperref.sty}{\hypertrue}{\hyperfalse}
\ifhyper\usepackage{hyperref}\fi
\renewcommand{\section}{\@startsection{section}{1}{0pt}{-12pt}{5pt}{\large\bf}}
\renewcommand{\subsection}{\@startsection{subsection}{2}{0pt}{-12pt}{-5pt}{\normalsize\bf}}
\renewcommand{\subsubsection}{\@startsection{subsubsection}{3}{0pt}{-12pt}{-5pt}{\normalsize\bf}}
\def\nnewcolor{1}
\newcommand{\new}[1]{{\color{red} #1}}
\newcommand{\new}[1]{{#1}}
\newtheorem{theorem}{Theorem}
\newtheorem{lemma}[theorem]{Lemma}
\newtheorem{proposition}[theorem]{Proposition}
\newtheorem{corollary}[theorem]{Corollary}
\newtheorem{claim}[theorem]{Claim}
\theoremstyle{definition}
\newtheorem{definition}[theorem]{Definition}
\newcommand{\R}{\mathbb{R}}
\newcommand{\E}{\mathbb{E}}
\newcommand{\poly}{\mathrm{poly}}
\newcommand{\ignore}[1]{}
\newcommand{\eps}{\varepsilon}
\newcommand{\abs}[1]{\lvert#1\rvert}
\newcommand{\domain}[1]{[#1]}
\newcommand{\bigabs}[1]{\left\lvert#1\right\rvert}
\newcommand{\size}[1]{\lvert#1\rvert}
\newcommand{\norm}[1]{\lVert#1\rVert}
\newcommand{\floor}[1]{\lfloor#1\rfloor}
\newcommand{\Var}{\mathop{\textnormal{Var}}\nolimits}
\newcommand{\Bin}{\mathop{\textnormal{Binom}}\nolimits}
\newcommand{\Binom}{\mathop{\textnormal{Binom}}\nolimits}
\newcommand{\bin}{\mathop{\textnormal{bin}}\nolimits}
\newcommand{\Poi}{\mathop{\textnormal{Poi}}\nolimits}
\newcommand{\indic}[1]{\textbf 1_{#1}}
\newcommand{\stir}[2]{\left\{\begin{matrix}#1 \\ #2\end{matrix}\right\}}
\newcommand{\lin}{\mathcal L}
\newcommand{\oper}{\overline}
\renewcommand{\eqref}[1]{Eq.~(\ref{#1})}
\newenvironment{algorithm}[1][\  ] %
{ \rm
\begin{tabbing}
....\=.....\=.....\=.....\=.....\=  \+ \kill
} %
{\end{tabbing} }
\title{Optimal Algorithms for Testing Closeness of Discrete Distributions}
\author{Siu-On Chan\thanks{Supported by NSF award
DMS-1106999, DOD ONR grant N000141110140 and NSF award CCF-1118083.}\\
MSR New England\\
{\tt siuon@cs.berkeley.edu}.\\
\and
Ilias Diakonikolas\thanks{Supported in part by a SICSA PECE grant.
Part of this work was done while the author was at UC Berkeley supported by a Simons Postdoctoral Fellowship.}\\
University of Edinburgh\\
{\tt ilias.d@ed.ac.uk}.\\
\and
Gregory Valiant\thanks{The majority of this work was done while the author was at Microsoft Research.}\\
Stanford University\\
{\tt gregory.valiant@gmail.com}.\\
\and
Paul Valiant\\
Brown University \\
{\tt pvaliant@gmail.com}.
}
\begin{document}

\maketitle

\thispagestyle{empty}

\begin{abstract}
We study the question of closeness testing for two discrete distributions.
More precisely, given samples from two distributions $p$ and $q$ over an $n$-element set,
we wish to distinguish whether $p=q$ versus $p$ is at least $\eps$-far from $q$,
in either $\ell_1$ or $\ell_2$ distance.  Batu et al~\cite{BFR+:00, Batu13} gave the first sub-linear time algorithms for these problems, which matched the lower bounds of~\cite{PV11sicomp} up to a logarithmic factor in $n$, and a polynomial factor of $\eps.$

In this work, we present simple (and new) testers for both the $\ell_1$ and $\ell_2$ settings, with sample complexity
that is information-theoretically optimal, to constant factors, both in the dependence on $n$, and the dependence on $\eps$; for the $\ell_1$ testing problem we establish that the sample complexity is $\Theta(\max\{n^{2/3}/\eps^{4/3}, n^{1/2}/\eps^2 \}).$
\end{abstract}

\thispagestyle{empty}
\setcounter{page}{0}

\newpage

\section{Introduction}  \label{sec:intro}

Consider the following natural statistical task: Given independent samples from a pair of unknown distributions $p$, $q$, determine whether
the two distributions are {\em the same} versus significantly different. We focus on the most basic (and well-studied) setting
in which both $p$ and $q$ are discrete distributions supported on a set of size $n$. For a parameter $0<\eps<1$,
we want to distinguish (with probability at least $2/3$, say) between the case that $p=q$ and the case that $p$ and $q$ are {\em $\eps$-far} from each other,
i.e., the $\ell_1$ distance between $p$ and $q$ is at least $\eps$.
We will henceforth refer to this task as the problem of {\em closeness testing} for $p$ and $q$.

We would like to design an algorithm (tester) for this task that uses as few samples as possible
and is computationally efficient (i.e., has running time polynomial in its sample size).
One natural way to solve this problem would be to get sufficiently many samples from $p$, $q$ in order to {\em learn} each distribution
to accuracy $O(\eps)$, and then check closeness of the corresponding hypothesis distributions.
As natural as it may be, this testing-via-learning approach is quite naive and gives suboptimal results.
We note that learning an arbitrary distribution over support of size $n$ to $\ell_1$ distance $\eps$ requires $\Theta(n/\eps^2)$ samples (i.e., there is an upper bound
of $O(n/\eps^2)$ and a matching information-theoretic lower bound of $\Omega(n/\eps^2)$).
One might hope that a better sample size bound could be achieved for the closeness testing problem, since this task is, in some sense, more specific than the general task of learning.
Indeed, this is known to be the case: previous work~\cite{BFR+:00} gave a tester for this problem with sample complexity {\em sub-linear} in $n$
and polynomial in $1/\eps$.

Despite its long history in both statistics and computer science, the sample complexity of this basic task has not been resolved to date.
While the dependence on $n$ in the previous bound~\cite{BFR+:00} was subsequently shown~\cite{PValiant:08, PV11sicomp}
to be tight to within logarithmic factors of $n$, there was a polynomial gap between the upper and lower bounds in the dependence on $\eps$.
Due to its fundamental nature, we believe it is of interest from a theoretical standpoint to obtain an {\em optimal} sample (and time) algorithm for the problem.
From a practical perspective, we note that in an era of ``big data'' it is critical to use data efficiently. In particular,
in such a context, even modest asymptotic differences in the sample complexity can play a big role.

\smallskip

{\em In this paper, we resolve the complexity of the closeness testing problem, up to a constant factor, by designing a sample-optimal algorithm (tester)
for it whose running time is linear in the sample size.} Our tester has a different structure from the one in~\cite{BFR+:00} and is also much simpler.
We also study the  closeness testing problem with respect to the $\ell_2$ distance metric between distributions.
This problem, interesting in its own right, has been explicitly studied in previous work~\cite{GR00, BFR+:00}.

%; we note that the $\ell_1$-closeness tester of~\cite{BFR+:00} relies on an $\ell_2$-closeness tester (see discussion section below).

As our second contribution, we design a similarly optimal algorithm for closeness testing in the $\ell_2$ norm.  In this $\ell_2$ setting, we show that the \emph{same} sample complexity allows one to ``robustly'' test closeness; namely, the same sample complexity allows one to distinguish the case that $||p-q||_2 \le \eps$ from the case that $||p-q||_2 \ge 2 \eps.$   This correspondence between the robust and non-robust closeness testing in the $\ell_2$ setting does not hold for the $\ell_1$ setting: the lower bounds of~\cite{ValiantValiant:11focs} show that robust  $\ell_1$ testing for distributions of support size $n$ requires $\Theta(\frac{n}{\log n})$ samples (for constant $\eps$), as opposed to the $\Theta(n^{2/3})$ for the non-robust testing problem. One may alternately consider ``robust" closeness testing under the $\ell_2$ norm as essentially the problem of \emph{estimating} the $\ell_2$ distance, and the results of Proposition~\ref{prop:ggg} are presented from this perspective.

Algorithmic ideas developed for the closeness testing problem have typically been useful
for related testing questions, including the independence of bivariate distributions (see e.g.~\cite{BFFKRW:01, BKR:04}). It is plausible that our techniques may be used to obtain similarly optimal algorithms
for these problems, but we have not pursued this direction.

\smallskip

Before we formally state our results, we start by providing some background in the area of distribution property testing.

\medskip

\noindent {\bf Related Work.} Estimating properties of distributions using samples is a classical topic in statistics that has received considerable attention
in the theoretical CS community during the past decade; see~\cite{GR00, BFR+:00, BFFKRW:01, Batu01, BDKR:02, BKR:04,  Paninski:08, PValiant:08, Onak09, PV11sicomp, ValiantValiant:11,ValiantValiant:11focs, DDSVV13, Rub12, BaNNR11, DJOP11, LRR11, ILR12, AIOR11} for a sample of works
and~\cite{Rub12} for a recent survey on the topic. In addition to closeness testing, various properties of distributions
have been considered, including independence~\cite{BFFKRW:01, Onak09}, entropy~\cite{ BDKR:02}, and the more general class of ``symmetric'' properties~\cite{PValiant:08,ValiantValiant:11,ValiantValiant:11focs},
monotonicity~\cite{BKR:04}, etc.
%In addition, testing under different metric has been considered, different models
%and applications to other areas.
%\new{Add some more prose here? e.g. classify these papers based on what they do?}

One of the first theoretical CS papers that explicitly studied such questions is the work of Batu et al~\cite{BFR+:00} (see~\cite{Batu13} for the journal version). In this work, the authors formally pose the closeness testing problem
and give a tester for the problem with sub-linear sample complexity. In particular, the sample complexity of their algorithm under the $\ell_1$ norm
is $O(\frac{n^{2/3} \log n}{\eps^{8/3}})$.
A related (easier) problem is that of {\em uniformity testing}, i.e., distinguishing between the case that an unknown distribution $p$ (accessible via samples) is uniform versus
$\eps$-far from uniform. Goldreich and Ron~\cite{GR00}, motivated by a connection to testing expansion in graphs, obtained a uniformity tester using $O(\sqrt{n}/\eps^4)$ samples. Subsequently, Paninski gave the tight bound of $\Theta(\sqrt{n}/\eps^2)$~\cite{Paninski:08}. (Similar results are obtained for both testing problems under the $\ell_2$ norm.)

\medskip
%\new{Need to clean up discussion below.}

\noindent{\bf Notation.} We write $[n]$ to denote the set $\{1, \ldots, n\}$. We consider discrete probability distributions over $[n]$, which are functions
$p: [n] \rightarrow [0,1]$ such that $\sum_{i=1}^n p_i =1.$ We will typically use the notation $p_i$ to denote the probability of element
$i$ in distribution $p$. The $\ell_1$ (resp. $\ell_2$) norm of a distribution is identified with the $\ell_1$ (resp. $\ell_2$) norm of the corresponding $n$-vector, i.e.,
$\|p\|_1 = \sum_{i=1}^n |p_i|$ and $\|p\|_2 = \sqrt{\sum_{i=1}^n p^2_i}$. The $\ell_1$ (resp. $\ell_2$) distance between distributions $p$ and $q$
is defined as the  the $\ell_1$ (resp. $\ell_2$) norm of the vector of their difference, i.e., $\|p-q\|_1 = \sum_{i=1}^n |p_i -q_i|$ and
$\|p-q\|_2 = \sqrt{\sum_{i=1}^n (p_i-q_i)^2}$.   For $\lambda \ge 0$, we denote by $\Poi(\lambda)$ the Poisson distribution with parameter
$\lambda.$ %We sometimes write $\Poi(\lambda_1, \lambda_2)$ to denote a bivariate Poisson distribution whose components  are independent with parameters $\lambda_1, \lambda_2$.

\medskip

\noindent {\bf Our Results.} Our main result is an optimal algorithm for the $\ell_1$-closeness testing problem:
\begin{theorem} \label{thm:main}
Given $\eps>0$ and sample access to distributions $p$ and $q$ over $[n]$, there is an algorithm which uses $O(\max\{n^{2/3}/\eps^{4/3}, n^{1/2}/\eps^2 \})$ samples,
runs in time linear in its sample size and with probability at least $2/3$ distinguishes whether $p=q$ versus $\|p-q\|_1 \ge \eps$.  Additionally, $\Omega(\max\{n^{2/3}/\eps^{4/3}, n^{1/2}/\eps^2 \})$ samples are information-theoretically necessary.
\end{theorem}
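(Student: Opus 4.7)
The plan is to analyze a Poissonized $\chi^2$-type tester for the upper bound, and to invoke existing moment-matching machinery for the lower bound. Take $m = C \cdot \max\{n^{2/3}/\eps^{4/3}, n^{1/2}/\eps^2\}$ for a large constant $C$, draw $\Poi(m)$ samples from each of $p$ and $q$, and for each $i \in [n]$ let $X_i \sim \Poi(m p_i)$, $Y_i \sim \Poi(m q_i)$ be the resulting counts; these are mutually independent across $i$ by Poissonization. Define
$$
Z \;=\; \sum_{i \,:\, X_i + Y_i > 0} \frac{(X_i - Y_i)^2 - (X_i + Y_i)}{X_i + Y_i}
$$
and accept iff $Z$ is below a threshold $\tau = \Theta(m\eps^2)$. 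Each $Z_i$ is designed to be an unbiased estimator (up to a benign prefactor) of $(p_i - q_i)^2/(p_i+q_i)$, and the sum over $i$ is a proxy for $\|p-q\|_1^2$ via Cauchy--Schwarz.

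For the mean, I would condition on $N_i := X_i + Y_i$: since $X_i \mid N_i \sim \Bin(N_i, p_i/(p_i+q_i))$, a short Poisson moment calculation gives
$$
\E[Z_i] \;=\; \bigl(m(p_i+q_i) - 1 + e^{-m(p_i+q_i)}\bigr) \cdot \frac{(p_i-q_i)^2}{(p_i+q_i)^2},
$$
which is identically $0$ when $p = q$. In the far case, Cauchy--Schwarz yields $\sum_i (p_i-q_i)^2/(p_i+q_i) \geq \tfrac12 \|p-q\|_1^2 \geq \eps^2/2$; since the prefactor above is $\Theta\bigl(\min\{m(p_i+q_i),\,(m(p_i+q_i))^2\}\bigr)$, a short case analysis on the size of $m(p_i+q_i)$ gives $\E[Z] = \Omega(m\eps^2)$.

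The main obstacle is the variance analysis. By Poisson independence, $\Var(Z) = \sum_i \Var(Z_i)$, and each $\Var(Z_i)$ has to be controlled despite the $1/(X_i+Y_i)$ denominator, which is delicate when $m(p_i+q_i) \lesssim 1$. The plan is to split each term into a ``signal$\times$noise'' contribution (from the conditional variance of $(X_i - Y_i)^2/N_i$ given $N_i$) and a ``pure noise'' contribution (from the recentering $-(X_i+Y_i)/N_i$), and to bound each via careful Poisson moment estimates. Summing over $i$ and using $\|p\|_1 = \|q\|_1 = 1$ should yield a bound of the form $\Var(Z) \lesssim m + m^2 \sum_i (p_i-q_i)^2 \cdot g(m,p_i,q_i)$, tight enough that comparison with $\E[Z]^2 = \Omega(m^2\eps^4)$ gives $\Var(Z) = o(\E[Z]^2)$ as soon as $m$ exceeds the claimed threshold in \emph{both} regimes: the $n^{1/2}/\eps^2$ piece arises from the pure-noise $m$ term (as in Paninski's uniformity tester), and the $n^{2/3}/\eps^{4/3}$ piece arises from the signal$\times$noise term at the worst-case mass scale $\Theta(\eps^{2/3}/n^{1/3})$. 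Chebyshev then closes the argument.

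For the matching lower bound, I would appeal directly to the Poissonized moment-matching framework of \cite{PValiant:08, PV11sicomp, ValiantValiant:11focs}. The $\Omega(\sqrt{n}/\eps^2)$ piece follows from Paninski's uniform-versus-paired-perturbation instance. The $\Omega(n^{2/3}/\eps^{4/3})$ piece follows from a two-level construction, supported on roughly $n^{2/3}/\eps^{4/3}$ atoms of probability $\Theta(\eps^{2/3}/n^{1/3})$ each, perturbed so as to match the first two Poissonized moments while achieving $\ell_1$ distance $\eps$; a standard $\chi^2$-distance calculation between the two induced product distributions over $m$ samples shows that no tester can succeed with fewer samples than claimed. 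The bottleneck of the whole program is the variance calculation; everything else reduces to elementary Poisson moment arithmetic or to a direct citation of the Valiant--Valiant machinery.
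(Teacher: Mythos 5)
Your proposal uses the same statistic, the same Poissonization and conditioning-on-$N_i$ calculation for the mean, the same conditional-variance decomposition, the same Chebyshev closing step, and the same sources for the lower bound. That said, there is a genuine gap in the mean estimate that, as written, would defeat the argument in precisely the regime the theorem is about.

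You claim that ``a short case analysis on the size of $m(p_i+q_i)$ gives $\E[Z]=\Omega(m\eps^2)$,'' and you set the threshold at $\tau = \Theta(m\eps^2)$. This is false when $m < n$, i.e.\ when $\eps > n^{-1/4}$, which is exactly the regime where the $n^{2/3}/\eps^{4/3}$ term dominates. To see the failure concretely, take $p,q$ near-uniform on $[n]$ with $p_i - q_i = \pm 2\eps/n$. Then $m(p_i+q_i) \approx 2m/n \ll 1$ for every $i$, the prefactor $m(p_i+q_i)-1+e^{-m(p_i+q_i)} = \Theta((m(p_i+q_i))^2)$, and
\[
\E[Z] \approx \sum_i \frac{m^2(p_i-q_i)^2}{2} = \Theta\!\left(\frac{m^2\eps^2}{n}\right),
\]
which is $o(m\eps^2)$. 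At the scale $m \sim n^{2/3}/\eps^{4/3}$ this gives $\E[Z] = \Theta(n^{1/3}/\eps^{2/3}) = \Theta(\sqrt{m})$, nowhere near $m\eps^2 = n^{2/3}\eps^{2/3}$. Indeed, if your bound $\E[Z]=\Omega(m\eps^2)$ together with $\Var[Z] = O(m)$ were true, Chebyshev would give a tester with only $O(1/\eps^4)$ samples, independent of $n$, contradicting the $\Omega(\sqrt{n}/\eps^2)$ lower bound of Paninski that you yourself cite. The correct lower bound on the mean is $\E[Z]\ge \frac{m^2}{4n+2m}\|p-q\|_1^2$: rather than crudely applying Cauchy--Schwarz to $\sum_i (p_i-q_i)^2/(p_i+q_i)$ and then discarding the $\min\{1, m(p_i+q_i)\}$ penalty, you must keep the penalty inside the Cauchy--Schwarz step. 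Writing $g(\alpha)=\alpha/\bigl(1-\tfrac{1-e^{-\alpha}}{\alpha}\bigr)$ so that $\E[Z_i]=m^2(p_i-q_i)^2/g(m(p_i+q_i))$, Cauchy--Schwarz gives $\E[Z]\ge m^2\|p-q\|_1^2/\sum_i g(m(p_i+q_i))$, and the elementary bound $g(\alpha)\le 2+\alpha$ gives $\sum_i g \le 4n+2m$. The threshold must then be set at $\Theta(\sqrt{m})$ (commensurate with $\sqrt{\Var[Z]}$ when $p=q$), not at $\Theta(m\eps^2)$, and the Chebyshev comparison of $\E[Z]^2 \ge m^4\eps^4/(4n+2m)^2$ against $\Var[Z] = O(\min\{n,m\}+m\sum_i(p_i-q_i)^2/(p_i+q_i))$ is what actually yields the two regimes $m=\Omega(n^{2/3}/\eps^{4/3})$ and $m=\Omega(\sqrt{n}/\eps^2)$.

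Two smaller points: your stated lower-bound construction with $n^{2/3}/\eps^{4/3}$ atoms each of mass $\Theta(\eps^{2/3}/n^{1/3})$ has total mass $\Theta(n^{1/3}/\eps^{2/3}) \gg 1$; the heavy atoms should have mass $\Theta(\eps^{4/3}/n^{2/3})$. And the variance bound you sketch, $\Var(Z)\lesssim m + \cdots$, can and should be tightened to $O(\min\{n,m\}) + \cdots$, since the ``pure noise'' contribution per coordinate is bounded by both a constant and by $m(p_i+q_i)$; this is what lets the $\sqrt{n}/\eps^2$ regime close.
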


The lower bound is obtained by leveraging the techniques of~\cite{PV11sicomp} to show that $\Omega(n^{2/3}/\eps^{4/3})$ is a lower bound, as
long as $\eps = \Omega(n^{-1/4})$ (see Section~\ref{sec:lb} for the proof). On the other hand,
the sample complexity of $\ell_1$-closeness testing is bounded from below by
the sample complexity of uniformity testing (for all values of $n$ and $\eps>0$), since knowing that one distribution is exactly the uniform distribution can only make the testing problem easier.

Hence, by the result of Paninski~\cite{Paninski:08}, it follows that $\Omega(\sqrt{n}/\eps^2)$ is also a lower bound.
The tight lower bound of $\Omega(\max\{n^{2/3}/\eps^{4/3}, n^{1/2}/\eps^2 \})$ follows from the fact that the two functions intersect for $\eps = \Theta(n^{-1/4})$.
Hence, our algorithm of Theorem~\ref{thm:main} is optimal (up to constant factors) for all $\eps>0$.

Our second result is an algorithm for ``robustly'' testing the closeness of a pair of distributions with respect to $\ell_2$ distance, which is also information theoretically optimal for all parameters, to constant factors. The parameter $b$ in the following theorem upper-bounds the $\ell_2$ norm-squared of each distribution, which allows the theorem to be more finely tuned to the cases when testing should be easier or harder.
\begin{theorem} \label{thm:l2}
For two distributions $p,q,$ over $[n]$ with $b \ge ||p||_2^2,||q||_2^2,$ there is an algorithm which distinguishes the case that $||p-q||_2 \le \eps$ from the case that $||p-q||_2 \ge 2\eps$ when given $O(\sqrt{b}/\eps^2)$ samples from $p$ and $q$  with probability at least $2/3$. This is information theoretically optimal, as distinguishing the case that $p=q$ from the case that $||p-q||_2 > 2\eps$ requires $\Omega(\sqrt{b}/\eps^2)$ samples.
\end{theorem}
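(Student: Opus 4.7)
The plan is to construct an unbiased Poissonized estimator of $m^2\|p-q\|_2^2$ and analyze it via Chebyshev's inequality. Taking $\Poi(m)$ samples from each of $p$ and $q$ with $m=\Theta(\sqrt b/\eps^2)$ produces mutually independent counts $X_i\sim\Poi(m p_i)$ and $Y_i\sim\Poi(m q_i)$, and the statistic
$$Z \;=\; \sum_{i=1}^n\bigl((X_i-Y_i)^2 - X_i - Y_i\bigr)$$
satisfies $\E[Z]=m^2\|p-q\|_2^2$ by a short calculation using $\Var(X_i-Y_i)=m(p_i+q_i)$. The tester accepts ``$\|p-q\|_2\le\eps$'' iff $Z\le\tfrac{3}{2}m^2\eps^2$; the standard Poissonization-to-fixed-sample reduction absorbs the $\pm O(\sqrt m)$ wobble in the sample size.

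The crux is the variance bound. Independence across coordinates reduces to a single-coordinate computation: using the cumulants of the Skellam variable $X_i-Y_i$, one can show $\Var\bigl((X_i-Y_i)^2 - X_i - Y_i\bigr) = O\bigl((\lambda_i+\mu_i)^2 + (\lambda_i+\mu_i)(\lambda_i-\mu_i)^2\bigr)$ with $\lambda_i=mp_i$, $\mu_i=mq_i$. Summing yields
$$\Var(Z) \;=\; O\Bigl(m^2\|p+q\|_2^2 \;+\; m^3\textstyle\sum_i(p_i+q_i)(p_i-q_i)^2\Bigr).$$
The first term is $O(m^2 b)$ via $\|p+q\|_2\le\|p\|_2+\|q\|_2\le 2\sqrt b$. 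For the second, Cauchy--Schwarz together with the inequality $\|v\|_4\le\|v\|_2$ applied to $v=p-q$ gives
$$\sum_i(p_i+q_i)(p_i-q_i)^2 \;\le\; \|p+q\|_2\cdot\|p-q\|_4^2 \;\le\; 2\sqrt b\,\|p-q\|_2^2,$$
so altogether $\Var(Z)=O\bigl(m^2 b + m^3\sqrt b\,\|p-q\|_2^2\bigr)$. Plugging in $m=C\sqrt b/\eps^2$ with $C$ large, in the yes-case ($\|p-q\|_2\le\eps$) both $\sqrt{m^2 b}$ and $\sqrt{m^3\sqrt b\,\eps^2}$ are much smaller than $m^2\eps^2$; in the no-case the signal $\E[Z]-\tfrac32 m^2\eps^2=\Omega(m^2\|p-q\|_2^2)$ dominates $\sqrt{\Var(Z)}$, since the worst variance grows only linearly in $\|p-q\|_2^2$ while the signal grows quadratically in $\|p-q\|_2$, so larger distances only help. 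Chebyshev's inequality then yields success with probability $\ge 2/3$ in both cases.

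For the matching lower bound I would reduce from Paninski's $\Omega(\sqrt k/\delta^2)$ uniformity-testing bound on support of size $k$. Take $k=\lceil 1/b\rceil$, $p$ uniform on $[k]$ (so $\|p\|_2^2=1/k\le b$), and let $q$ either equal $p$ or be a Paninski-style random $\pm\delta/k$ perturbation with $\delta=2\eps\sqrt k$; then $\|q-p\|_2=2\eps$ while $\|q-p\|_1=\delta$, and Paninski's bound translates to $\Omega(\sqrt k/\delta^2)=\Omega(\sqrt b/\eps^2)$. The main obstacle in the upper bound is isolating the correct dependence on $b$ in the variance bound---the Cauchy--Schwarz/$\|\cdot\|_4\le\|\cdot\|_2$ step is what replaces a naive $\|p+q\|_\infty$ factor (which could be as large as $1$) by the tight $\sqrt b$.
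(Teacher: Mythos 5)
Your proposal is correct and matches the paper's argument essentially line for line: same Poissonized statistic $Z=\sum_i (X_i-Y_i)^2-X_i-Y_i$, same per-coordinate variance $\Var[Z_i]=4m^3(p_i-q_i)^2(p_i+q_i)+2m^2(p_i+q_i)^2$, same Cauchy--Schwarz step $\sum_i(p_i+q_i)(p_i-q_i)^2\le \|p+q\|_2\|p-q\|_4^2\le 2\sqrt b\|p-q\|_2^2$ followed by Chebyshev (the paper just packages this as an additive $\pm\eps$ estimator of $\|p-q\|_2$, parameterized by $\|p-q\|_4$, and deduces the testing theorem via $\|\cdot\|_4\le\|\cdot\|_2$). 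Your lower bound also uses the same hard instance (uniform on $1/b$ elements with random $\pm\Theta(\eps\sqrt b)$ perturbations); you invoke Paninski's bound directly while the paper cites the more general machinery of Valiant--Valiant, but these amount to the same calculation on this instance.
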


We note that both the upper and lower bounds of the above theorem continue to hold if $b$ is defined to be an upper bound on $||p||_{\infty},||q||_{\infty}$; the upper bound trivially holds because, for all $p$, $||p||_2^2 \le \max_i p_i$, and the lower bound holds because the specific lower bound instance we construct consists of nearly uniform distributions for which $||p||_2^2 \ge \max_i p_i/2.$ See Proposition~\ref{prop:ggg} and the discussion following it for analysis of our algorithm as an \emph{estimator} for $\ell_2$ distance.

\medskip
\iffalse As we show in Section~\ref{sec:lb}, the above sample complexity is information-theoretically optimal when $b^3 \le \eps^4$.
This statement can be similarly obtained as an application of~\cite{PV11sicomp}. \fi

\iffalse
\new{We note that, using the two step approach of~\cite{BFR+:00} outlined in the introduction (i.e., ``filtering'' the heavy elements in the first step followed by an $\ell_2$ tester), Theorem~\ref{thm:l2} can be used as a black-box to obtain an $\ell_1$  closeness tester with sample complexity $O(n^{2/3} \log n/ \eps^2)$.  We stress that the ``filtering'' algorithm of~\cite{BFR+:00}  is not optimal. In Appendix~\ref{ssec:improved-l1} we describe an optimal ``filtering'' algorithm that, combined with Theorem~\ref{thm:l2}, yields an
$\ell_1$  closeness tester with sample complexity $O(n^{2/3}/ \eps^2)$. Since the sample complexity of both these steps is optimal, the latter bound seems to be the best that could possibly be achieved via this reduction-based approach, supporting the view that, in some sense, the novel (and more direct) approach underlying Theorem~\ref{thm:main} is necessary to achieve the optimal $\eps$-dependence.
}
\fi

\noindent {\bf The $\ell_2\rightarrow\ell_1$ testing approach.}
Recall that the $\ell_1$ closeness tester in~\cite{BFR+:00} proceeds in two steps:
In the first step, it ``filters'' the elements of $p$ and $q$ that are ``$b$-heavy'', i.e.,
have probability mass at least $b$ -- for an appropriate value of $b$.
(This step essentially amounts to {\em learning} the heavy
parts of $p$ and $q$.) In the second step, it uses an $\ell_2$ closeness
tester applied to the ``light'' parts of $p$ and $q$.
The $\ell_2$ tester used in~\cite{BFR+:00}  is a generalization of a tester
proposed in~\cite{GR00}.

\new{Using such a two step approach, Theorem~\ref{thm:l2} can be used as a black-box to obtain an $\ell_1$  closeness tester with sample complexity $O(n^{2/3} \log n/ \eps^2)$.  This can further be improved to $O(n^{2/3}/ \eps^2)$ by improving the ``filtering'' algorithm of~\cite{BFR+:00};  in Appendix~\ref{ssec:improved-l1} we describe an optimal ``filtering'' algorithm, which might be applicable in other settings.
Curiously, since the sample complexity of both the improved filtering algorithm, and the $\ell_2$ tester are optimal, the corresponding sample complexity of $O(n^{2/3}/ \eps^2)$ for the $\ell_1$ testing problem seems to be the best that could possibly be achieved via this reduction-based approach.  This suggests that, in some sense, our novel (and more direct) approach underlying Theorem~\ref{thm:main} is necessary to achieve the optimal $\eps$-dependence for the $\ell_1$ testing problem.
}

\bigskip

{
\noindent {\bf Structure of the paper.} In  Section~\ref{sec:L1} we present our $\ell_1$ tester, and in Section~\ref{sec:L2} we present our $\ell_2$ tester.
In Section~\ref{sec:lb} we prove the information theoretic lower bounds, establishing the optimality of both testers.
The details of the reduction--based (though suboptimal) $\ell_1$ closeness tester can be found in the appendix.
}

\bigskip

%\newpage
\noindent {\bf Remark.} Throughout our technical sections, we employ the standard ``Poissonization'' approach: namely, we assume that, rather than drawing $k$ independent samples from a distribution, we first select $k'$ from $\Poi(k)$, and then draw $k'$ samples.  This Poissonization makes the number of times different elements occur in the sample independent, simplifying the analysis.  As $\Poi(k)$ is tightly concentrated about $k$, we can carry out this Poissonization trick without loss of generality at the expense of only subconstant factors in the sample complexity.

\section{Closeness testing in $\ell_1$ norm} \label{sec:L1}

We begin by describing our $\ell_1$ closeness testing algorithm:
\medskip

\fbox{\parbox{6in}{
Input: A constant $C$ and $m$ samples from distributions $p,q$, with $X_i,Y_i$ denoting the number of occurrences of the $i$th domain elements in the samples from $p$ and $q$, respectively.\\
\begin{enumerate}
  \item Define \begin{equation}\label{eq-alg}Z=\sum_i \frac{(X_i-Y_i)^2-X_i-Y_i}{X_i+Y_i}.\end{equation}
  \item If $Z \leq C \cdot \sqrt{m}$ then output EQUAL, else output DIFFERENT.
\end{enumerate}
}}
\medskip

The following proposition characterizes the performance of the above tester, establishing the algorithmic portion of Theorem~\ref{thm:main}.

\begin{proposition}\label{prop:l1}
There exist absolute constants $C,C'$ such that the above algorithm, on input $C$ and a set of $\Poi(m)$ samples drawn from two distributions, $p,q$, supported on $[n]$,  will correctly distinguish the case that $p=q$ from the case that $||p-q||_1 \ge \eps$, with probability at least $2/3$ provided that $m \ge C' \max\{n^{2/3}/\eps^{4/3},n^{1/2}/\eps^2\}.$
\end{proposition}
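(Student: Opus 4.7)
The plan is a moment-method argument via Chebyshev's inequality. By Poissonization, $\{(X_i,Y_i)\}_{i\in[n]}$ are mutually independent with $X_i\sim\Poi(a_i)$ and $Y_i\sim\Poi(b_i)$, where $a_i:=mp_i$ and $b_i:=mq_i$; hence $Z=\sum_i Z_i$ is a sum of $n$ independent random variables and both $\mathbb{E}[Z]$ and $\Var(Z)$ split coordinatewise. The argument has two parts: an expectation lower bound that separates the two cases, and a variance upper bound that lets Chebyshev close the argument.

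For the expectation, I would condition on $k:=X_i+Y_i$. Since $X_i\mid(X_i+Y_i=k)\sim\mathrm{Binom}(k,a_i/(a_i+b_i))$, a short calculation with the mean and variance of the binomial gives
\[\mathbb{E}[(X_i-Y_i)^2-(X_i+Y_i)\mid k]=k(k-1)\,\frac{(a_i-b_i)^2}{(a_i+b_i)^2}.\]
Dividing by $k$ and averaging over $k\sim\Poi(a_i+b_i)$ produces
\[\mathbb{E}[Z_i]=\frac{(a_i-b_i)^2}{(a_i+b_i)^2}\bigl((a_i+b_i)-1+e^{-(a_i+b_i)}\bigr),\]
which is zero when $p=q$. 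In the soundness case, the elementary inequality $s-1+e^{-s}\geq c\min(s,s^2)$ (valid with $c=1/e$) gives $\mathbb{E}[Z_i]\geq c(a_i-b_i)^2/\max(1,a_i+b_i)$, and a Cauchy--Schwarz with weights $w_i:=\max(1,a_i+b_i)$ (using $\sum_i w_i\leq n+2m$ and $\sum_i|a_i-b_i|\geq m\eps$) yields the soundness lower bound $\mathbb{E}[Z]\geq cm^2\eps^2/(n+2m)$.

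The variance estimate is the technical crux. I would apply the law of total variance, bound $\Var_k(\mathbb{E}[Z_i\mid k])$ by $O((a_i-b_i)^4/(a_i+b_i)^3)$ (which will be absorbed into a constant multiple of $\mathbb{E}[Z_i]$), and expand $\Var((2X_i-k)^2\mid k)/k^2$ using the second through fourth central moments of the conditional binomial. Taking the Poisson average over $k$ then yields a per-coordinate estimate of the form $\Var(Z_i)\leq C_1\,\mathbb{E}[Z_i]+C_2\min(1,(a_i+b_i)^2)$ for absolute constants $C_1,C_2$. Summing over $i$ and optimizing against the constraints $\sum_i(a_i+b_i)=2m$ and $|[n]|=n$ gives $\Var(Z)\leq C_1\,\mathbb{E}[Z]+C_2\min(m,n)$.

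Chebyshev finishes the argument. In the $p=q$ case, $\mathbb{E}[Z]=0$ and $\Var(Z)\leq C_2 m$, so $\Pr[Z>C\sqrt m]\leq C_2/C^2\leq 1/6$ when $C$ is chosen sufficiently large. For soundness, sufficient conditions are $\mathbb{E}[Z]\geq 2C\sqrt m$ and $\mathbb{E}[Z]^2\geq 12\,\Var(Z)$; substituting the mean lower bound and variance upper bound, both conditions reduce, separately in the two regimes $m\leq n$ and $m\geq n$, to $m\geq C'\max\{n^{2/3}/\eps^{4/3},\ n^{1/2}/\eps^2\}$ for an absolute $C'$ determined by the relationships between $c,C,C_1,C_2$. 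The main obstacle will be extracting the per-coordinate variance bound with tight enough absolute constants that this Chebyshev inequality closes uniformly: the binomial fourth-moment expansion is routine but bookkeeping-heavy, and the $(a_i-b_i)^2$-type terms appearing in the variance must be paired with the corresponding contributions to $\mathbb{E}[Z_i]$ so as to be absorbed rather than dominate.
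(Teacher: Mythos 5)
Your plan reproduces the paper's proof almost step for step: Poissonize, condition on $X_i+Y_i$ to compute $\E[Z_i]$ exactly, lower-bound $\E[Z]$ by Cauchy--Schwarz, upper-bound $\Var(Z)$ by the law of total variance, and close with Chebyshev. The expectation step is identical modulo notation: your inequality $s-1+e^{-s}\ge c\min(s,s^2)$ followed by Cauchy--Schwarz against weights $\max(1,a_i+b_i)$ is the same estimate the paper packages via the function $g(\alpha)=\alpha\big/\bigl(1-\tfrac{1-e^{-\alpha}}{\alpha}\bigr)$ and the bound $g(\alpha)\le 2+\alpha$, and both give $\E[Z]=\Omega(m^2\eps^2/(n+m))$.

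One concrete flag on the variance sketch. You assert $\Var_k\bigl(\E[Z_i\mid k]\bigr)=O\!\bigl((a_i-b_i)^4/(a_i+b_i)^3\bigr)$ and that this term ``will be absorbed into a constant multiple of $\E[Z_i]$.'' Writing $s_i=a_i+b_i$ and $\alpha_i=b_i/s_i$, that bound equals $(1-2\alpha_i)^4 s_i$, while $\E[Z_i]=(1-2\alpha_i)^2(s_i-1+e^{-s_i})=\Theta\bigl((1-2\alpha_i)^2 s_i^2\bigr)$ for small $s_i$; the ratio is $\Theta\bigl((1-2\alpha_i)^2/s_i\bigr)$ and is unbounded as $s_i\to 0$, so the absorption you claim fails per-coordinate. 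Two ways to repair it. (i) Use the exact Poisson variance: $\Var_k\bigl((k-1)\indic{k\ge1}\bigr)=s_i^2-u_i(1+u_i)$ with $u_i=s_i-1+e^{-s_i}$, which is $\Theta(u_i)$ uniformly in $s_i$; multiplying by $(1-2\alpha_i)^4\le(1-2\alpha_i)^2$ then gives $\Var_k(\E[Z_i\mid k])=O(\E[Z_i])$, and your clean per-coordinate bound $\Var(Z_i)\le C_1\E[Z_i]+C_2\min(1,s_i^2)$ goes through as you want. (ii) Keep the coarse bound, arriving (as the paper does) at $\Var(Z)\le 2\min(n,m)+\sum_i 5m(p_i-q_i)^2/(p_i+q_i)$, and do the absorption in the Chebyshev step via a case split on whether $m(p_i+q_i)\le 2$: the small-$s_i$ coordinates contribute at most $O(n)$ and merge into the $\min(n,m)$ term, while the rest are dominated by $\E[Z]^2$. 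Either way closes; as literally written, your sketch needs fix (i). Everything else in the proposal, including the final two-regime calculation, matches the paper.
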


We will show that the error probability of the above algorithm is $O(\frac{1}{C^2})$, hence for a suitable constant $C$ the tester succeeds with probability $\frac{2}{3}$. (Repeating the tester and taking the majority answer results in an exponential decrease in the error probability.)

\iffalse
The analysis below relies on using a \emph{Poisson-distributed} number of samples from each probability distribution. This is easy to simulate with only 10\% more samples and a negligible increase in the error of our algorithm: given $1.1m$ samples from a distribution, draw a number from the distribution $\Poi(m)$, and if it is less than or equal to $1.1m$ then return that number of samples; otherwise, which happens with $o(\frac{1}{\poly(m)})$ probability, abort.\fi

The form of the right hand side of \eqref{eq-alg} is rather similar to our $\ell_2$ distance tester (given in the next section), though the difference in normalization is crucial. However, though we do not prove corresponding theorems here, the right hand side of \eqref{eq-alg} can have a variety of related forms while yielding similar results, with possibly improved constants. For example, one could use $\sum_i |X_i-Y_i|-f(X_i+Y_i)$, where $f(j)$ is the expected deviation from $j/2$ heads in $j$ fair coin flips, which is ${j-1\choose\lfloor(j-1)/2\rfloor}\frac{j}{2^j}$.

\medskip

In the remainder of this section we prove Proposition~\ref{prop:l1}.  First, letting $p_i,q_i$ respectively denote the probabilities of the $i$th elements in each distribution, note that
if $p_i=q_i$ then the expectation of the sum in \eqref{eq-alg} is 0, as can be seen by conditioning the summand for each $i$ on the value of $X_i+Y_i$: subject to this, $X_i,Y_i$ can be seen as the number of heads and tails respectively found in $X_i+Y_i$ fair coin flips, and $\E[(X_i-Y_i)^2]$ is 4 times the variance of $X_i$ alone, which is a quarter of the number of coin flips, and thus the expression in total has expectation 0.

When $p \neq q$, we use the following lemma to bound from below the expected
value of our estimator in terms of $\norm{p-q}_1$.

\begin{lemma}\label{L1-prop1}
For $Z$ as defined in Equation~\ref{eq-alg}, $\E[Z] \ge \frac{m^2}{4n+2m}
\norm{p-q}_1^2.$
\end{lemma}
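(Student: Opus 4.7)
The plan is to compute the expectation coordinate by coordinate, exploiting Poissonization: for each $i$, $X_i \sim \Poi(mp_i)$ and $Y_i \sim \Poi(mq_i)$ are independent. I condition on $N_i := X_i + Y_i$ (which reduces each coordinate's contribution to a binomial calculation), take the unconditional expectation, and then combine the per-coordinate bounds using Cauchy--Schwarz to pass from an $\ell_2$-type expression back to $\|p-q\|_1$.

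The heart of the argument is a single per-coordinate identity. Given $N_i = n$, the conditional law of $X_i$ is $\Binom(n, r_i)$ with $r_i = p_i/(p_i+q_i)$, so $\E[(X_i - Y_i)^2 \mid N_i = n] = 4nr_i(1-r_i) + n^2(r_i - s_i)^2$, where $s_i := 1 - r_i$. Subtracting $n = \E[X_i + Y_i \mid N_i = n]$, the terms linear in $n$ cancel exactly (this is why the numerator in \eqref{eq-alg} was engineered with the $-X_i - Y_i$ correction), leaving
$$\E[(X_i-Y_i)^2 - X_i - Y_i \mid N_i = n] = n(n-1)(r_i - s_i)^2.$$
Dividing by $N_i$ and averaging over $N_i \sim \Poi(\lambda_i)$ with $\lambda_i := m(p_i + q_i)$, using the convention that the $N_i = 0$ case contributes $0$ (the numerator vanishes there), I obtain
$$\E\!\left[\frac{(X_i-Y_i)^2 - X_i - Y_i}{X_i+Y_i}\right] = (r_i - s_i)^2 \bigl(\lambda_i - 1 + e^{-\lambda_i}\bigr).$$

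To finish, I use the elementary inequality $\lambda - 1 + e^{-\lambda} \geq \lambda^2/(2 + \lambda)$ for $\lambda \geq 0$ (a short calculus check: equality holds at $\lambda = 0$ and the difference has nonnegative derivative, which reduces to $(1+\lambda)e^{-\lambda} \leq 1$). Substituting $(r_i - s_i)^2 = (p_i - q_i)^2/(p_i + q_i)^2$ and $\lambda_i = m(p_i+q_i)$ gives
$$\E[Z] \geq m^2 \sum_i \frac{(p_i - q_i)^2}{2 + m(p_i+q_i)}.$$
Applying Cauchy--Schwarz with weights $a_i := 2 + m(p_i+q_i)$ yields $\|p-q\|_1^2 \leq \bigl(\sum_i (p_i-q_i)^2/a_i\bigr)\bigl(\sum_i a_i\bigr)$, and since $\sum_i a_i = 2n + 2m \leq 4n + 2m$, this delivers the claimed lower bound on $\E[Z]$. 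The main obstacle is really just the bookkeeping in the conditional calculation: once one sees that the estimator is designed so that $\E[\cdot \mid N_i]$ factors as $N_i(N_i - 1)$ times the squared relative bias, everything collapses to a standard elementary inequality plus Cauchy--Schwarz.
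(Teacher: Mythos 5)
Your proof is correct and follows essentially the same path as the paper's: condition on $X_i+Y_i$ to reduce the $i$th summand to a binomial second-moment calculation, average over the Poisson count to get $(r_i-s_i)^2(\lambda_i - 1 + e^{-\lambda_i})$, and combine coordinates via Cauchy--Schwarz. The only difference is order of operations---you apply the elementary bound $\lambda - 1 + e^{-\lambda} \geq \lambda^2/(2+\lambda)$ before Cauchy--Schwarz, while the paper applies Cauchy--Schwarz first with weights $g(m(p_i+q_i))$ and then bounds $g(\alpha) \leq 2+\alpha$, which is exactly the same inequality after cross-multiplying---and your ordering even yields the marginally tighter denominator $2n+2m$.
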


\begin{proof}
Conditioned on $X_i+Y_i=j$, for some $j$, we have that $X_i$ is distributed as the number of heads in the distribution $\Binom(j,\frac{p_i}{p_i+q_i})$. For the distribution $\Binom(j,\alpha)$, the expected value of the square of the difference between the number of heads and tails can be easily seen to be
% the expected value of the square of $j$ minus twice the number of heads, which equals $j^2-4j^2\alpha+4(j^2\alpha^2+j\alpha(1-\alpha))=4j^2(\frac{1}{2}-\alpha)^2+4j\alpha(1-\alpha)$
$4j^2(\frac{1}{2}-\alpha)^2+4j\alpha(1-\alpha);$ we subtract $j$ from this because of the $-X_i-Y_i$ term in the numerator of \eqref{eq-alg} to yield $4(j^2-j)(\frac{1}{2}-\alpha)^2$, and divide by $j$ because of the denominator of \eqref{eq-alg} to yield $4(j-1)(\frac{1}{2}-\alpha)^2$. Plugging in $\alpha=\frac{p_i}{p_i+q_i}$ yields $(j-1)(\frac{p_i-q_i}{p_i+q_i})^2$. Thus the expected value of the summand of \eqref{eq-alg}, for a given $i$, conditioned on $X_i+Y_i=j$ is this last expression, if $j\neq 0$, and 0 otherwise. Thus the expected value of the summand across all $j$, since $\E[j]=m(p_i+q_i)$, equals $$m\frac{(p_i-q_i)^2}{p_i+q_i}-(1-e^{-m(p_i+q_i)})(\frac{p_i-q_i}{p_i+q_i})^2,$$ where we have used the fact that $\Pr[X_i+Y_i=0]=e^{-m(p_i+q_i)}$. Gathering terms, we conclude that the expectation of each term of \eqref{eq-alg} equals \begin{equation}\label{eq-alg-ex}\frac{(p_i-q_i)^2}{p_i+q_i}m\left(1-\frac{1-e^{-m(p_i+q_i)}}{m(p_i+q_i)}\right)\end{equation}

Defining the function $g(\alpha)=\alpha\left/\left(1-\frac{1-e^{-\alpha}}{\alpha}\right)\right.$, this expression becomes $m^2\frac{(p_i-q_i)^2}{g(m(p_i+q_i))}$, and we bound its sum via Cauchy-Schwarz as \[m^2\left(\sum_i\frac{(p_i-q_i)^2}{g(m(p_i+q_i))}\right)\left(\sum_i g(m(p_i+q_i))\right)\geq m^2\left(\sum_i |p_i-q_i|\right)^2\]

It is straightforward to bound $g(\alpha)\leq 2+\alpha$, leading to $\sum_i g(m(p_i+q_i))\leq 4n+2m$, since the support of each distribution is at most $n$ and each has total probability mass 1. Thus the expected value of the left hand side of \eqref{eq-alg} is at least $\frac{m^2}{4n+2m}\left(\sum_i |p_i-q_i|\right)^2$.
\end{proof}

We now bound the variance of the $i$th term of $Z$.

\begin{lemma}\label{L1-prop2}
For $Z$ as defined in Equation \eqref{eq-alg}, $\Var[Z] \le 2\min\{n,m\} +
\sum_i 5m \frac{(p_i-q_i)^2}{p_i+q_i}$.
\end{lemma}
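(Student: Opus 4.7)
The plan is to Poissonize, which makes the summands of $Z$ independent across $i$, and then analyze each $\Var[Z_i]$ via the law of total variance conditional on $J_i := X_i + Y_i \sim \Poi(m(p_i+q_i))$. The two pieces of this decomposition will correspond to the two terms in the stated bound.

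First, because the per-distribution sample counts are Poisson, the pairs $(X_i, Y_i)$ are mutually independent across $i$, and writing $Z_i := \bigl((X_i-Y_i)^2 - X_i - Y_i\bigr)/(X_i+Y_i)$ (interpreted as $0$ on $\{J_i = 0\}$) gives $\Var[Z] = \sum_i \Var[Z_i]$. Conditional on $J_i = j > 0$, $X_i \sim \Binom(j, \alpha_i)$ with $\alpha_i := p_i/(p_i+q_i)$, and the preceding lemma already established $\E[Z_i \mid J_i = j] = (j-1)(2\alpha_i-1)^2$ for $j \ge 1$ (and $0$ for $j = 0$). The ``between-group'' variance is therefore
\[
\Var[\E[Z_i \mid J_i]] = (2\alpha_i-1)^4 \, \Var\bigl[(J_i - 1)\mathbf{1}[J_i \ge 1]\bigr] \le (2\alpha_i-1)^4 m(p_i+q_i) \le m\cdot\frac{(p_i-q_i)^2}{p_i+q_i},
\]
where the first inequality is a direct Poisson computation showing $\Var[(J_i-1)\mathbf{1}[J_i \ge 1]] \le \Var[J_i] = m(p_i+q_i)$, and the second uses $|p_i-q_i| \le p_i+q_i$.

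For the ``within-group'' variance, I would write $T := 2X_i - j$, so $Z_i = (T^2 - j)/j$ on $\{J_i = j > 0\}$, and note that $T$ is a sum of $j$ i.i.d.\ $\pm 1$ variables with mean $\mu := 2\alpha_i - 1$. A direct calculation of the second and fourth moments of $T$ yields a bound of the shape $\Var[Z_i \mid J_i = j] \le c_1 + c_2 \, j \mu^2(1-\mu^2)$ for absolute constants $c_1, c_2$, where the constant $c_1$ captures the $\chi^2_1$-like fluctuation of a centered binomial and the second term captures the extra variance from $\alpha_i \neq 1/2$. Taking expectations over $J_i$ and using $\Pr[J_i \ge 1] \le \min\{1, m(p_i+q_i)\}$ and $\E[J_i] = m(p_i+q_i)$, I obtain $\E[\Var[Z_i \mid J_i]] \le c_1\min\{1, m(p_i+q_i)\} + c_2 m (p_i-q_i)^2/(p_i+q_i)$. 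Summing over $i$, the first term contributes at most $c_1\min\{n, m\}$ since $\sum_i \min\{1, m(p_i+q_i)\} \le \min\{n, 2m\}$, and the second term contributes the required signal estimate.

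The main technical obstacle is the fourth-moment bound $\Var[T^2 \mid J_i = j]$: one must expand $\E[T^4]$ in terms of the central moments of $T$ together with the mean $j\mu$, and verify that the ``pure'' $j^2(1-\mu^2)^2$ contribution cancels against $(\E[T^2])^2$, so that only the $O(1)$ and the $O(j\mu^2)$ pieces survive. The explicit constants $2$ and $5$ appearing in the statement should then fall out by combining the between- and within-group estimates and carefully tracking coefficients; no idea beyond Poissonization, the law of total variance, and standard binomial moment expansions is needed.
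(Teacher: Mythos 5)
Your proposal is correct and takes essentially the same approach as the paper: Poissonization to make the $Z_i$ independent, the law-of-total-variance split conditional on $J_i = X_i+Y_i$, the between-group bound $\Var[(J_i-1)\mathbf{1}[J_i\ge 1]]\le \Var[J_i]$ combined with $|1-2\alpha_i|\le 1$, and a fourth-moment bound on the centered binomial for the within-group piece (which the paper carries out explicitly, obtaining $\Var[Z_i\mid J_i=j]\le 2+4j(1-2\alpha_i)^2$, matching the shape $c_1+c_2 j\mu^2$ you describe). The only thing left unwritten in your sketch is the routine expansion of $\Var[(j-2Y_i)^2]$ in binomial moments, which is exactly the step the paper computes; the structure and the way the constants $2$ and $5$ arise are identical.
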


\begin{proof}
To bound the variance of the $i$th term of $Z$, we will split this variance calculation into two parts: the variance conditioned on $X_i+Y_i=j$, and the component of the variance due to the variation in $j$.  Letting $$f(X_i,Y_i)=\frac{(X_i-Y_i)^2-X_i-Y_i}{X_i+Y_i},$$ we have that $$\Var[f(X,Y)] \le \max_j\left( \Var[f(X,Y)|X+Y=j]\right)+\Var[\E[f(X,Y)|X+Y=j]].$$  We now bound the first term; since $(X_i - Y_i)^2 = (j - 2Y_i)^2$, and $Y_i$ is distributed as
$\Binom(j; \frac{q_i}{p_i+q_i})$ where for convenience we let  $\alpha=\frac{q_i}{p_i+q_i}$ we can compute the variance of $(j -2Y_i)^2$ from standard expressions for the moments of the Binomial distribution as $$\Var[(j - 2Y_i)^2] =16j(j -1)\alpha(1 -\alpha)\left( (j-\frac{3}{2})(1-2\alpha)^2+\frac{1}{2}\right).$$ We bound this expression, since $\alpha(1-\alpha) \le \frac{1}{4}$ and $j-\frac{3}{2} < j-1 < j$ as $j^2(2+4j(1-2\alpha)^2)$.  Because the denominator of the $i$th term of \eqref{eq-alg} is $X_i + Y_i = j$, we must divide this by $j^2$, make it $0$ when $j = 0$, and take its expectation as $j$ is distributed as $\Poi(m(p_i + q_i))$, yielding: $$\Var[f(X_i,Y_i)|X_i+Y_i=j] \le 2(1 - e^{-m(p_i+q_i)}) + 4m\frac{(p_i-q_i)^2}{p_i+q_i}.$$

We now consider the second component of the variance---the contribution to the variance due to the variation in the sum $X_i + Y_i$. Since for fixed $j$, as noted above, we have $Y_i$ distributed as $\Binom(j; \frac{q_i}{p_i+q_i}),$ where for convenience we let $\alpha = \frac{q_i}{p_i+q_i},$ we have $$\E[(X_i - Y_i)^2] = \E[j^2-4jY_i+4Y_i^2]=j^2-4j^2\alpha+4(j\alpha-j\alpha^2+j^2\alpha^2)=j^2(1-2\alpha)^2+4j\alpha(1-\alpha).$$ As in \eqref{eq-alg}, we finally subtract $j$ and divide by $j$ to yield $(j - 1)(1 - 2 \alpha)^2$, except with a value of $0$ when $j = 0$ by definition; however, note that replacing the value at $j=0$ with $0$ can only lower the variance. Since  the sum $j=X_i+Y_i$ is drawn from a Poisson distribution with parameter $m(p_i + q_i)$, we thus have: $$\Var \left[ \E[f(X_i,Y_i)|X_i+Y_i=j] \right] \le m(p_i + q_i)(1 - 2\alpha)^4 \le  m(p_i + q_i)(1 - 2\alpha)^2 = m\frac{(p_i-q_i)^2}{p_i+q_i}.$$

Summing the final expressions of the previous two paragraphs yields a bound on the variance of the $i$th term of \eqref{eq-alg} of $$2(1 - e^{-m(p_i+q_i)}) + 5m\frac{(p_i-q_i)^2}{p_i+q_i}.$$ We note that since $1 - e^{-m(p_i+q_i)}$ is bounded
by both $1$ and $m(p_i +q_i)$, the sum of the first part is bounded as $$\sum_i 2(1- e^{-m(p_i+q_i)}) \le 2 \min\{n,m\}.$$
This completes the proof.
\end{proof}

We now complete our proof of Proposition~\ref{prop:l1}, establishing the upper bound of Theorem~\ref{thm:main}.
\begin{proof}[Proof of Proposition~\ref{prop:l1}]
With a view towards applying Chebyshev's inequality, we compare the square of the expectation
of $Z$ to its variance. From Lemma~\ref{L1-prop1}, the expectation equals $$\left( \sum_i \frac{(p_i-q_i)^2}{p_i+q_i} m \left( 1-\frac{1-e^{-m(p_i+q_i)}}{m(p_i+q_i)}\right)\right)^2,$$ which we showed is at least $\frac{m^2}{4n+2m}
\norm{p-q}_1^2$; from Lemma~\ref{L1-prop2}, the variance is at most $$2\min\{n,m\} + \sum_i 5m\frac{(p_i-q_i)^2}{p_i+q_i}.$$

We consider the second part of the variance expression. It is clearly bounded by $10m$, so when $m < n$ the first expression dominates. Otherwise, assume that $m \ge n$. %We have from above that $\sum_i \frac{(p_i-q_i)^2}{p_i+q_i} m \left( 1-\frac{1-e^{-m(p_i+q_i)}}{m(p_i+q_i)}\right) \ge
Consider the case when our bound on the expectation, $\frac{m^2}{4n+2m}\norm{p-q}_1^2,$ is at least $2$, namely that $m = \Omega(\norm{p-q}_1^{-2}).$ Thus, with a view towards applying Chebyshev's inequality, we can bound the square of the expectation by: $$\left(\sum_i \frac{(p_i-q_i)^2}{p_i+q_i} m \left( 1-\frac{1-e^{-m(p_i+q_i)}}{m(p_i+q_i)}\right)\right)^2 \ge \sum_i \frac{(p_i-q_i)^2}{p_i+q_i} m \left( 1-\frac{1-e^{-m(p_i+q_i)}}{m(p_i+q_i)}\right)\cdot 2.$$

For those $i$ for which the multiplier $\left( 1-\frac{1-e^{-m(p_i+q_i)}}{m(p_i+q_i)}\right) \cdot 2$ is greater than $1$, we have that the $i$th term here is greater than the $i$th term of the expression for the variance, $\sum_i \frac{(p_i-q_i)^2}{p_i+q_i}m;$ otherwise, we have $1 - \frac{1-e^{-m(p_i+q_i)}}{m(p_i+q_i)} \le \frac{1}{2}$ which implies $m(p_i + q_i) \le 2,$ and thus the sum of the remaining terms is bounded by $2n$, which is dominated by the first expression in the variance, $2 \min\{n, m\}$ in the
case under consideration, where $m \ge n$. Thus we need only compare the square of the expectation, which is at least $\frac{m^2}{4n+2m}\norm{p-q}_1^2=\frac{m^2}{O(\max\{n,m\})}\norm{p-q}_1^2,$ to $O(\min\{n,m\}),$  yielding, when $m < n$ a bound
$m = \Omega(n^{2/3}/\norm{p-q}_1^{4/3}),$ and when $m \ge n$ a bound $m = \Omega(n^{1/2}/\norm{p-q}_1^2);$  note that in the latter case, this implies $m = \Omega(\norm{p-q}_1^{-2}),$  which we needed in the derivation above.
\end{proof}

\section{Robust $\ell_2$ testing} \label{sec:L2}

In this section, we give an optimal algorithm for robust closeness testing of distributions with respect to $\ell_2$ distance.  For distributions $p$ and $q$ over $[n]$ with $\ell_2^2$ norm at most $b$ (i.e., $\sum_i p_i^2 \le b,$ and $\sum_i q_i^2 \le b$), the algorithm when given $O(\sqrt{b}/\eps^2)$ samples will distinguish the case that $||p-q||_2 \le \eps$ from the case that $||p-q|| \ge 2\eps,$ with high probability.    Since $||p||_2^2 \le \max_i p_i,$ this sample complexity is also bounded by the corresponding expression with $b$ replaced by a bound on the maximum probability of an element of $p$ or $q$.   As we show in Section~\ref{sec:lb}, this sample complexity is optimal even for the easier testing problem of distinguishing the case that the $\ell_2$ distance is $0$ versus at least $\eps$.

Our algorithm is a very natural linear estimator and is similar to the $\ell_2$ tester of~\cite{BFR+:00}.\\

\medskip
\fbox{\parbox{6in}{
Input: $m$ samples from distributions $p,q$, with $X_i,Y_i$ denoting the number of occurrences of the $i$th domain elements in the samples from $p$ and $q$, respectively.\\
Output: an estimate of $||p-q||_2.$
\begin{enumerate}
  \item Define $Z=\sum_{i}(X_i-Y_i)^2-X_i-Y_i.$
  \item Return $\frac{\sqrt{Z}}{m}$.
\end{enumerate}
}}
\medskip

The following proposition characterizes the performance of the above estimator, establishing the algorithmic portion of Theorem~\ref{thm:l2} from the observation that $||p-q||_4^2 \le ||p-q||_2^2.$

\begin{proposition}\label{prop:ggg}
There exists an absolute constant $c$ such that the above estimator, when given $\Poi(m)$ samples drawn from two distributions, $p,q$ will, with probability at least $3/4$, output an estimate of $||p-q||_2$ that is accurate to within $\pm \eps$ provided that $m \ge c\left(\frac{\sqrt{b}}{\eps^2}+ \frac{\sqrt{b}||p-q||_4^2}{\eps^4}\right),$ where $b$ is an upper bound on $||p||_2^2, ||q||_2^2$.
\end{proposition}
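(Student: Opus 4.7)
The plan is to apply Chebyshev's inequality to $Z$ and then pass through a square root. Under Poissonization, the coordinates $X_i \sim \Poi(mp_i)$ and $Y_i \sim \Poi(mq_i)$ are independent across $i$ and between the two samples, so $Z$ is a sum of independent terms $W_i := (X_i-Y_i)^2 - X_i - Y_i$. The first step is to compute $\E[Z]$. The key identity is that for $X \sim \Poi(\lambda)$, $\E[X(X-1)] = \lambda^2$, so writing $W_i = X_i(X_i-1) + Y_i(Y_i-1) - 2 X_i Y_i$ and using independence of $X_i$ and $Y_i$ immediately yields $\E[W_i] = m^2(p_i-q_i)^2$ and therefore $\E[Z] = m^2 \|p-q\|_2^2$.

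Next I would bound $\Var[Z] = \sum_i \Var[W_i]$ term by term. Expanding $W_i$ as above, each $\Var[X_i(X_i-1)]$ can be computed in closed form using factorial moments of the Poisson distribution (giving $4m^3p_i^3 + 2m^2p_i^2$), and the cross-covariances with $-2X_iY_i$ can be computed similarly since $X_i \perp Y_i$. After the algebraic dust settles the expression should collapse to something of the form
\[
\Var[W_i] \;\le\; C\bigl( m^2 (p_i+q_i)^2 + m^3 (p_i-q_i)^2 (p_i+q_i) \bigr)
\]
for an absolute constant $C$. Summing over $i$ and using $\sum_i (p_i+q_i)^2 \le 2(\|p\|_2^2 + \|q\|_2^2) \le 4b$ together with Cauchy--Schwarz $\sum_i (p_i-q_i)^2(p_i+q_i) \le \|p-q\|_4^2\,\sqrt{\sum_i (p_i+q_i)^2} \le 2\sqrt{b}\,\|p-q\|_4^2$ gives
\[
\Var[Z] \;\le\; O\!\bigl(m^2 b + m^3 \sqrt{b}\,\|p-q\|_4^2\bigr).
\]

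With the mean and variance in hand, Chebyshev's inequality yields $|Z - m^2\|p-q\|_2^2| \le \tfrac{1}{2} m^2 \eps^2$ with probability at least $3/4$, provided $\Var[Z]$ is at most an absolute constant times $m^4 \eps^4$. Dividing the two variance contributions through by $m^4\eps^4$ gives exactly the two stated sample requirements $m \gtrsim \sqrt{b}/\eps^2$ and $m \gtrsim \sqrt{b}\,\|p-q\|_4^2/\eps^4$. Finally, to convert the additive error on $Z/m^2$ into an additive error of $\eps$ on the estimate $\sqrt{Z}/m$ of $\|p-q\|_2$, I use the identity $\bigl|\sqrt{Z}/m - \|p-q\|_2\bigr| \cdot \bigl(\sqrt{Z}/m + \|p-q\|_2\bigr) = |Z/m^2 - \|p-q\|_2^2|$: in the regime $\|p-q\|_2 \ge \eps$ the denominator absorbs one factor of $\eps$, and in the regime $\|p-q\|_2 < \eps$ the bound $Z/m^2 \le 2\eps^2$ directly gives $\sqrt{Z}/m \le \sqrt{2}\,\eps$, in either case yielding an $O(\eps)$ error which, by adjusting constants, can be made at most $\eps$.

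The main obstacle is the variance computation: the cross-term $-2 X_i Y_i$ produces covariances with each of the squared terms, and one must verify that the non-negligible $m^4$ contributions cancel, leaving only the $m^3$ and $m^2$ pieces shown above. Once this cancellation is identified, the remaining Cauchy--Schwarz bound and the square-root conversion are routine.
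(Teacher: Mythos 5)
Your proposal is correct and follows essentially the same route as the paper: compute $\E[Z]=m^2\|p-q\|_2^2$ via Poisson factorial moments, get $\Var[Z_i]=4m^3(p_i-q_i)^2(p_i+q_i)+2m^2(p_i+q_i)^2$, bound $\sum_i(p_i-q_i)^2(p_i+q_i)$ by Cauchy--Schwarz as $2\sqrt{b}\,\|p-q\|_4^2$, and apply Chebyshev followed by the square-root conversion. The paper delegates the variance computation to a one-line Mathematica footnote, whereas you spell it out by expanding $W_i=X_i(X_i-1)+Y_i(Y_i-1)-2X_iY_i$; you also make explicit the $|a-b|\,|a+b|=|a^2-b^2|$ step for converting the squared-distance bound into a distance bound, which the paper leaves implicit. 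One small imprecision: the degree-$4$ terms cancel \emph{within} each $\Var[X_i(X_i-1)]$ (since $\E[(X^2-X)^2]$ and $(\E[X^2-X])^2$ share a $\lambda^4$), not across the covariance terms $\mathrm{Cov}[X_i(X_i-1),X_iY_i]$, which are already only cubic in $m$; this does not affect the correctness of your outline.
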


\begin{proof}
Letting $X_i,Y_i$ denote the number of occurrences of the $i$th domain elements in the samples from $p$ and $q$, respectively. Define $Z_1=(X_i-Y_i)^2-X_i-Y_i.$   Since $X_i$ is distributed as $\Poi(m\cdot p_i),$  $\E[Z_i] = m^2\cdot \abs{p_i - q_i}^2$, hence $Z$ is an unbiased estimator for $m^2||p-q||_2^2.$

We compute the variance of $Z_i$ via a straightforward calculation involving standard expressions for the moments of a Poisson distribution\footnote{This calculation can be performed in Mathematica, for example, via the expression {\texttt Variance[TransformedDistribution[(X - Y){\textasciicircum}2 - X -   Y, \{X \textbackslash[Distributed] PoissonDistribution[m p],   Y \textbackslash[Distributed] PoissonDistribution[m q]\}]]}}
: $\Var[Z_i]= 4(p_i-q_i)^2(p_i+q_i)m^3 + 2(p_i + q_i)^2m^2.$

Hence  $$\Var[Z]  = \sum_i \Var[Z_i] = \sum_i \left(4m^3(p_i-q_i)^2 (p_i+q_i)  + 2m^2(p_i + q_i)^2\right).$$ By Cauchy-Schwarz, and since $\sum_i (p_i+q_i)^2 \le 4b$, we have $$\sum_i (p_i-q_i)^2(p_i+q_i) \le \sqrt{\sum_i(p_1-q_i)^4 \sum_i (p_i+q_i)^2} \le 2 ||p-q||_4^2 \sqrt{b}.$$ Hence $$\Var[Z] \le 8m^3 \sqrt{b} ||p-q||_4^2 + 8 m^2 b.$$

By Chebyshev's inequality, the returned estimate of $||p-q||_2$ will be accurate to within $\pm \eps$ with probability at least $3/4$ provided $\eps^2 m^2 \ge 2 \sqrt{8m^3 \sqrt{b} ||p-q||_4^2 + 8 m^2 b},$ which holds whenever $$m \ge 6\frac{\sqrt{b}}{\eps^2}+ 32\frac{\sqrt{b}||p-q||_4^2}{\eps^4},$$ since $m\geq x+y$ implies $m^2\geq mx+y^2$, for any $x,y\geq 0$.
\end{proof}

A slightly different kind of result is obtained if we parameterize by $B=\max\{\max_i p_i,\max_i q_i\}$ instead of $b$---where we note that $B\geq b$. We can replace the Cauchy Schwarz inequality of the proof above with $\sum_i (p_i-q_i)^2(p_i+q_i) \le 2B \sum_i (p_i-q_i)^2=2B ||p-q||_2^2$, yielding, analogously to above, that the tester is accurate to $\pm\epsilon$ when given $c(\frac{\sqrt{B}}{\eps^2}+\frac{B||p-q||_2^2}{\eps^4})$ samples. This matches the lower-bound of $\Omega(\frac{\sqrt{B}}{\eps^2})$ provided the second term is not much larger than the first, namely when $\frac{||p-q||_2}{\eps}=O(B^{-1/2})$. Thus our algorithm approximates $\ell_2$ distance to within $\epsilon$ using the optimal number of samples, provided the $\ell_2$ distance is not a $B^{-1/2}$ factor greater than $\epsilon$. For greater distances, we have not shown optimality.

\bigskip

\new{
\noindent {\bf An $O(n^{2/3}/\eps^2)$ $\ell_1$-tester.}
As noted in the introduction, Theorem~\ref{thm:l2} combined with the two step approach of~\cite{BFR+:00}, immediately leads to an $\ell_1$ tester for distinguishing the case that $p=q$ from $||p-q||_1 \ge \eps$
with sample complexity  $O(n^{2/3} \log n/\eps^2)$. One can use Theorem~\ref{thm:l2}
to obtain an $\ell_1$ tester with sample complexity $O(n^{2/3}/\eps^2)$ -- i.e., saving a factor of $\log n$ in the sample complexity.   While this does not match the $O(\max\{n^{2/3}/\eps^{4/3}, n^{1/2}/\eps^2 \})$ performance of the $\ell_1$ tester described in Section~\ref{sec:L1}, the ideas used to remove the $\log n$ factor might be applicable to other problems, and we give the details in Appendix~\ref{ssec:improved-l1}.

}

\section{Lower bounds} \label{sec:lb}
In this section, we present our lower bounds for closeness testing under $\ell_1$ and $\ell_2$ norms.
We derive the results of this section as applications of the machinery developed in~\cite{PV11sicomp} and~\cite{VV13}.

The lower bounds for $\ell_1$ testing require the following definition:
\begin{definition}
  The $(k,k)$-based moments $m(r,s)$ of a distribution pair $(p,q)$ are
  $k^{r+s}\sum_{i=1}^n p_i^r q_i^s$.
\end{definition}

\begin{theorem}[\cite{PV11sicomp}, Theorem 4.18] \label{thm:wt}
  If distributions $p_1^+$,$p_2^+$, $p_1^-$, $p_2^-$ have probabilities at most
  $1/1000k$, and their $(k,k)$-based moments $m^+,m^-$ satisfy
  \[ \sum_{r+s\geq 2} \frac{\abs{m^+(r,s)-m^-(r,s)}}{\floor{\frac
        r2}!\floor{\frac s2}!\sqrt{1+\max\{m^+(r,s),m^-(r,s)\}}} < \frac 1{360}
    , \]
  then the distribution pair $(p_1^+,p_2^+)$ cannot be distinguished with probability $13/24$ from
  $(p_1^-,p_2^-)$ by tester that takes $Poi(k)$ samples from each distribution.
\end{theorem}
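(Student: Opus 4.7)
The plan is to apply Le Cam's two-point method and bound the total variation distance between the Poissonized sample distributions induced by $(p_1^+,p_2^+)$ and $(p_1^-,p_2^-)$. First, I would Poissonize: drawing $\Poi(k)$ samples from each distribution makes the multiplicities $X_i, Y_i$ of each domain element $i$ independent across $i$, with $X_i \sim \Poi(kp_{1,i})$ and $Y_i \sim \Poi(kp_{2,i})$. Since any tester can without loss of generality be assumed symmetric over the domain, its output depends only on the joint fingerprint $F_{a,b} = \#\{i : X_i = a,\, Y_i = b\}$, reducing the task to bounding $\dtv$ between the two fingerprint distributions.

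Second, using the small-probability hypothesis $p_{j,i} \le 1/(1000k)$, I would show that the fingerprint distribution is close to a product of independent Poissons, $\bigotimes_{(a,b)} \Poi(\lambda_{a,b})$, with rates $\lambda_{a,b} = \sum_i e^{-k(p_{1,i}+p_{2,i})} (kp_{1,i})^a(kp_{2,i})^b/(a!\,b!)$. The bound on each $p_{j,i}$ ensures that the exponential factor is $1 - O(1/1000)$ and that the weak dependencies between different $F_{a,b}$ coordinates (arising from the shared underlying sample) contribute only a controlled correction. Taylor-expanding and collecting terms rewrites $a!\,b!\,\lambda_{a,b}$ as $m(a,b)$ plus higher-order moment corrections with small coefficients, so the fingerprint distribution is determined, up to negligible error, by the $(k,k)$-based moments $m(r,s)$.

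Third, I would bound $\dtv$ between the two product Poissons via the triangle-type inequality $\dtv(\prod P_i, \prod Q_i) \le \sum \dtv(P_i, Q_i)$ and a sharp single-Poisson estimate $\dtv(\Poi(\lambda^+), \Poi(\lambda^-)) \lesssim |\lambda^+ - \lambda^-|/\sqrt{1+\max(\lambda^+, \lambda^-)}$. Substituting the moment expressions produces a bound of the general shape claimed, and the restriction to $r+s \ge 2$ comes from the fact that $F_{0,0}, F_{1,0}, F_{0,1}$ are essentially determined by the support size and expected total sample counts and so carry no distinguishing information.

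The main obstacle is recovering the refined denominator $\floor{r/2}!\,\floor{s/2}!$ in place of the cruder $r!\,s!$-type factor that the preceding outline naively produces. I expect this to require expanding the log-likelihood ratio $\log(d\mu^-/d\mu^+)$ in the Charlier polynomial basis (the orthonormal polynomials against a Poisson reference), where a combinatorial identity relating Stirling numbers of the second kind to central binomial coefficients $\binom{r}{\floor{r/2}}$ yields the tighter $\floor{r/2}!$-type normalization per basis element. Chasing this through gives the precise weighting stated in the theorem, and the constants $1/360$ and $13/24$ then follow by standard TV-to-error conversion via Le Cam's inequality.
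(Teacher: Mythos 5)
This theorem is stated in the paper as a black-box citation from \cite{PV11sicomp}; there is no in-paper proof to compare against, so your sketch must be judged on its own terms.

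Your skeleton --- Poissonize, pass to the joint fingerprint $F_{a,b}$, approximate the fingerprint law by a product of independent Poissons with rates $\lambda_{a,b}\approx m(a,b)/(a!\,b!)$, then sum per-coordinate TV estimates --- has the right shape and is in the spirit of the original argument. The per-coordinate estimate $\dtv(\Poi(\lambda^+),\Poi(\lambda^-))\lesssim |\lambda^+-\lambda^-|/\sqrt{1+\max\{\lambda^+,\lambda^-\}}$ is a correct and standard bound, and your reason for discarding the $r+s<2$ terms (those moments are forced equal for any two distribution pairs on the same support) is sound. One thing you flag as the main obstacle is in fact a non-issue: plugging $\lambda_{a,b}=m(a,b)/(a!\,b!)$ into the per-coordinate bound and using $\sqrt{1+m/(a!b!)}\geq \sqrt{1+m}/\sqrt{a!\,b!}$ (valid since $a!b!\geq 1$) already yields a denominator of $\sqrt{a!\,b!}\sqrt{1+\max m}$, and since $\sqrt{r!}\geq \floor{r/2}!$ for all $r\geq 0$ (as $(2s)!\geq (s!)^2$), this is at least as strong as the stated form. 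No Charlier-polynomial machinery is needed for the bookkeeping.

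The genuine gap is the step you wave through as a "controlled correction": establishing that the joint fingerprint law is close in TV to the product-of-Poissons law. The $F_{a,b}$ are not independent --- each domain element contributes a $1$ to exactly one cell $(a,b)$, and $\sum_{a,b}F_{a,b}$ is deterministic --- and "weak dependencies contribute a controlled correction" is the conclusion you need, not an argument for it. Making this rigorous requires a quantitative multivariate Poisson approximation (Le Cam/Roos/Chen--Stein style) whose error term must itself be shown to be dominated by the moment sum in the hypothesis, or else a bespoke TV bound for these "generalized multinomial" fingerprint distributions, which is what the original proof in \cite{PV11sicomp} actually does. The $1/(1000k)$ bound on probabilities is precisely what makes each per-element event rare enough for such an approximation to have teeth, but you have not derived the approximation error, and without it the reduction to independent Poisson coordinates --- the load-bearing step --- is unsupported.
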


The optimality of our $\ell_1$  tester, establishing the lower bound of Theorem~\ref{thm:main},  follows from the following proposition together with the lower bound of $\sqrt{n}/\eps^2$ for testing uniformity given in~\cite{Paninski:08}.

\begin{proposition}
  If $\eps \geq 4^{3/4} n^{-1/4}$, then $\Omega(n^{2/3}\eps^{-4/3})$ samples
  are needed for $0$-vs-$\eps$ closeness testing under the $\ell_1$ norm.
\end{proposition}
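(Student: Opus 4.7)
The plan is to invoke Theorem~\ref{thm:wt} (Theorem 4.18 of~\cite{PV11sicomp}). For a target sample size $k^\ast = c \cdot n^{2/3}/\eps^{4/3}$ with $c$ a sufficiently small absolute constant, I would construct two distribution pairs $(p_1^+, p_2^+)$ with $p_1^+ = p_2^+$, and $(p_1^-, p_2^-)$ with $\|p_1^- - p_2^-\|_1 \ge \eps$, and verify that the weighted sum of $(k^\ast, k^\ast)$-moment differences in the theorem is below $1/360$. Applying the theorem would then give that the two pairs are indistinguishable with $\Poi(k^\ast)$ samples, yielding the claimed $\Omega(n^{2/3}/\eps^{4/3})$ lower bound.

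For the construction I would use a two-scale heavy--light partition of $[n]$: take $n_h$ ``heavy'' elements with underlying probability parameter $h$ and $n - n_h$ ``light'' elements with parameter $\ell$, chosen so that $n_h h + (n-n_h)\ell = 1$. In the $+$ pair, set $p_1^+ = p_2^+$ so that each heavy element has mass $h$ and each light element has mass $\ell$. In the $-$ pair, leave the light elements identical, but group the heavy elements into twin pairs and assign one twin the probabilities $(h+\delta, h-\delta)$ in $(p_1^-, p_2^-)$ and the other twin the reversed pair $(h-\delta, h+\delta)$. This keeps both marginals valid probability distributions and gives $\|p_1^- - p_2^-\|_1 = 2 n_h \delta$, which I would set equal to $\eps$. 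The critical structural feature is that the symmetric $\pm\delta$ perturbation causes odd powers of $\delta$ to cancel in the computation of $m^-(r,s) = k^{r+s}\sum_i (p_{1,i}^-)^r (p_{2,i}^-)^s$, so that $|m^+(r,s) - m^-(r,s)|$ scales like $k^{r+s} n_h h^{r+s-2}\delta^2$ (plus $\delta^4, \delta^6, \dots$ corrections) rather than like $\delta$. I would choose $h = \Theta(n^{-2/3})$ and $n_h$ tuned so that $n_h h = \Theta(1)$ (heavy part carries a constant fraction of the mass), forcing $\delta = \Theta(\eps n^{-2/3})$; the light-element parameters default to $\ell = \Theta(1/n)$.

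The main obstacle will be verifying the explicit bound in Theorem~\ref{thm:wt}, namely showing that the entire weighted series $\sum_{r+s\geq 2} |m^+(r,s) - m^-(r,s)|/(\lfloor r/2\rfloor!\lfloor s/2\rfloor!\sqrt{1+\max\{m^+,m^-\}})$ is at most $1/360$. The bottleneck terms are the low-order moments, especially $(r,s)=(1,1)$ and $(2,0)$, where there is no factorial suppression; one must exploit the $\delta^2/h^2 = \Theta(\eps^2)$ gain from the cancellation together with the lower bound $\max\{m^+,m^-\} \geq m^+(r,s)$ coming from the heavy-element contribution $k^{r+s} n_h h^{r+s}$ to see that each of these ratios is a small constant at $k = k^\ast$. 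Higher-order terms will decay geometrically because of both the factorial denominators and the $\delta^2/h^2$ factor compounded across moments. The hypothesis $\eps \geq 4^{3/4} n^{-1/4}$ enters precisely when checking that the chosen parameters satisfy $n_h \leq n$ and that the probabilities are at most $1/(1000 k^\ast)$ as required by the theorem; the explicit constant $4^{3/4}$ emerges from balancing these feasibility constraints against the $\ell_1$-distance constraint $2 n_h \delta = \eps$.
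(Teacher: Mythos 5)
Your high-level plan (apply Theorem~\ref{thm:wt} to a pair-of-pairs construction, exploit cancellation to get a $\delta^2$ rather than $\delta$ in the moment differences, and check the weighted series term by term) is the right shape, but your concrete construction is wrong, and the low-order check you say ``one must'' do actually fails when you carry it out. The paper keeps the \emph{heavy} part ($A$, each element with mass $b=\eps^{4/3}n^{-2/3}$) \emph{identical} in $p$ and $q$, and places all $\eps$ of $\ell_1$ distance on \emph{disjoint light} supports $B,C$ whose elements have mass $\eps a = 4\eps/n \ll b$. You instead perturb the heavy elements themselves by $\pm\delta$ and leave the light part fixed. This puts the signal and the ``noise'' at the same probability scale, and the noise no longer dominates the signal in the moment ratio.

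Concretely, with your parameters $n_h=\Theta(n^{2/3})$, $h=\Theta(n^{-2/3})$, $\delta = \eps/(2n_h)$, $k = c\,n^{2/3}\eps^{-4/3}$, the $(1,1)$ term of the weighted series is
\[
\frac{|m^+(1,1)-m^-(1,1)|}{\sqrt{1+\max\{m^+(1,1),m^-(1,1)\}}}
\;\approx\;
\frac{k^2 n_h \delta^2}{\sqrt{k^2 n_h h^2}}
\;=\;
k\sqrt{n_h}\,h \cdot \frac{\delta^2}{h^2}
\;=\;
\Theta\!\left(\frac{n^{1/3}}{\eps^{4/3}}\right)\cdot\Theta(\eps^2)
\;=\;
\Theta\!\left(n^{1/3}\eps^{2/3}\right),
\]
which is unbounded in $n$ in exactly the regime $\eps\ge 4^{3/4}n^{-1/4}$ the proposition covers (at the boundary it is $\Theta(n^{1/6})$; for constant $\eps$ it is $\Theta(n^{1/3})$). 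No choice of the constant $c$ in $k^\ast$ makes this $<1/360$ uniformly. The $\delta^2/h^2=\Theta(\eps^2)$ gain you identify is real, but you are multiplying it by $\sqrt{\max m}$, which for the heavy block is $k\sqrt{n_h}\,h = \Theta(n^{1/3}\eps^{-4/3})$; the product is not a constant. By contrast, the paper's $(1,1)$ difference is $k^2\eps^2 a$ with $a=4/n\ll b$ against a denominator $\sqrt{k^2(1-\eps)b}$, and the ratio works out to $O(c)$ precisely because the perturbed elements are much lighter than the unperturbed heavy block. More generally, the paper's differences scale like $k^t\eps^t a^{t-1}$, decaying geometrically in $t$, whereas yours scale like $k^t n_h h^{t-2}\delta^2$ without the $\eps^t$ suppression. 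The fix is to move the $\eps$-perturbation off the $n^{-2/3}$-scale block onto a disjoint support at probability scale $\Theta(\eps/n)$, exactly as the paper does.
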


\begin{proof}
  Let $b = \eps^{4/3}/n^{2/3}$ and $a = 4/n$, where the restriction on $\eps$ yields that $b\ge a$.
  Let $p$ and $q$ be the distributions
  \[ p = b \indic A + \eps a \indic B \qquad q = b \indic A + \eps a \indic C
  \]
  where $A$, $B$ and $C$ are disjoint subsets of size $(1-\eps)/b$, $1/a$ and
  $1/a$---where the notation $\indic A$ denotes the indicator function that is 1 on the set $A$.
  Then $\norm{p-q}_1 = 2\eps$.
  Let $k = cn^{2/3}\eps^{-4/3}$ for a small enough constant $0 < c < 1$, so
  that $\norm p_\infty = \norm q_\infty = b \leq \frac 1{1000k}$, since $b\geq a$.

  Let $(p_1^+,p_2^+) = (p,p)$ and $(p_1^-,p_2^-) = (p,q)$, so that they have
  $(k,k)$-based moments
  \[ m^+(r,s) = k^t(1-\eps)b^{t-1} +
    k^t\eps^ta^{t-1} \qquad m^-(r,s) = k^t(1-\eps)b^{t-1},  \]
  for $r,s\geq 1$, where $t = r+s$.
  We have the inequality
  \[ \frac {\abs{m^+(r,s)-m^-(r,s)}}{\sqrt{1+\max\{m^+(r,s),m^-(r,s)\}}} \leq
    \frac{k^t\eps^ta^{t-1}}{\sqrt{k^t(1-\eps)b^{t-1}}} . \]
  For $t\geq 2$, it is at most $k^{t/2}\eps^ta^{t-1}/b^{(t-1)/2} \leq c^{t/2}4^{(2t-1)/3}$
  (where we used that $\eps \geq 4/n$). Further, when one of $r$ or $s$ is 0, the moments are equal, since $p$ and $q$ are permutations of each other, yielding a contribution of 0 to the expression of Theorem~\ref{thm:wt}.
  Thus the expression in Theorem~\ref{thm:wt} is bounded by $O(c)$ as the sum of a geometric series (in two dimensions), and thus the distribution pairs $(p,p)$ and
  $(p,q)$ are indistinguishable by Theorem~\ref{thm:wt}.
\end{proof}

The optimality of our $\ell_2$ tester will follow from the following result from~\cite{VV13}:

\begin{theorem}[\cite{VV13}, Theorem 3] \label{thm:hd}
  Given a distribution $p$, and associated values $\eps_i \in [0,p_i],$ define the distribution over distributions, $Q_{p,\eps}$ by the following process: for each domain element $i$, randomly choose $q_i=p_i \pm \eps_i,$ and then normalize $q$ to be a distribution.  There exists a constant $c$ such that it takes at least $c \left(\sum_i \frac{\eps_i^4}{p_i^2} \right)^{-1/2}$ samples to distinguish $p$ from a sample drawn from a random element of $Q_{p,\eps}$ with success probability at least $2/3.$
\end{theorem}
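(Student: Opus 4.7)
The plan is to prove the lower bound by Le Cam's two-point method: compare the distribution of the sample under the null hypothesis $p$ to its distribution under the random alternative $q \sim Q_{p,\eps}$, and upper bound the total variation distance between these two sample-distributions via an explicit $\chi^2$-divergence calculation.

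First, I would Poissonize: draw $\Poi(k)$ samples rather than exactly $k$, so that conditional on the sampled distribution $r$, the count vector $(N_1,\ldots,N_n)$ has independent coordinates $N_i \sim \Poi(k r_i)$. I would then ignore the normalization step in $Q_{p,\eps}$ for the moment, so that the alternative sample-distribution becomes a product: for each $i$ independently, the count is drawn from the mixture $M_i = \tfrac{1}{2}\Poi(k(p_i+\eps_i)) + \tfrac{1}{2}\Poi(k(p_i-\eps_i))$. Because Poissonized counts are independent and the signs $s_i \in \{\pm 1\}$ are independent, the joint distribution under the alternative is exactly the product $\prod_i M_i$, while under the null it is $\prod_i \Poi(kp_i)$.

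The key technical step is a one-coordinate $\chi^2$ calculation using the standard identity
\[
\sum_{j \ge 0} \frac{\Poi(\lambda_1)(j)\,\Poi(\lambda_2)(j)}{\Poi(\lambda_0)(j)} \;=\; \exp\!\left(\frac{(\lambda_1-\lambda_0)(\lambda_2-\lambda_0)}{\lambda_0}\right).
\]
Applying this to $M_i^2 = \tfrac{1}{4}\sum_{s,s'} \Poi(k(p_i+s\eps_i))\Poi(k(p_i+s'\eps_i))$ against $\Poi(kp_i)$, three of the four cross-terms collapse and a short calculation gives $\chi^2(M_i,\Poi(kp_i)) + 1 = \cosh(k\eps_i^2/p_i)$. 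By tensorization of $\chi^2$ over independent coordinates,
\[
\chi^2\!\bigl(\textstyle\prod_i M_i, \prod_i \Poi(kp_i)\bigr) + 1 \;=\; \prod_i \cosh\!\left(\frac{k\eps_i^2}{p_i}\right) \;\le\; \exp\!\left(\frac{k^2}{2}\sum_i \frac{\eps_i^4}{p_i^2}\right),
\]
where I used $\cosh x \le e^{x^2/2}$. Choosing $k \le c\bigl(\sum_i \eps_i^4/p_i^2\bigr)^{-1/2}$ with $c$ a sufficiently small absolute constant makes the $\chi^2$ at most a small constant, hence $d_{\mathrm{TV}} \le \tfrac{1}{2}\sqrt{\chi^2}$ is smaller than $1/6$, and no tester can have success probability $2/3$.

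The principal obstacle I expect is the normalization step in the definition of $Q_{p,\eps}$: after dividing by $Z = 1 + \sum_i s_i \eps_i$, the coordinates of $q$ become correlated and the clean product structure underlying the $\chi^2$ tensorization is lost. I would address this by observing that $Z$ concentrates as $1 + O(\sqrt{\sum_i \eps_i^2})$ by Hoeffding, conditioning on the high-probability event that $Z$ lies in $[1 - \delta, 1 + \delta]$ for some $\delta = o(1)$ in the relevant regime, and then bounding the additional perturbation to the sample distribution produced by replacing the Poissonization rate $k$ with $k/Z$; a direct comparison of $\Poi(k q_i)$ with $\Poi(k q_i / Z)$ shows that this contributes only lower-order terms to the $\chi^2$ bound and does not affect the stated rate $k = \Omega\bigl((\sum_i \eps_i^4/p_i^2)^{-1/2}\bigr)$.
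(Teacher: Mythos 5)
The paper does not prove this statement; it is imported verbatim from \cite{VV13} (Theorem 3 there), so there is no internal proof to compare against. Your proposal is a self-contained Ingster/Le Cam--style $\chi^2$ argument, which is a different (and more elementary) route than the CLT-for-generalized-multinomials machinery developed in the Valiant--Valiant line of work. The core computation is correct: the identity $\sum_j \Poi(\lambda_1)(j)\Poi(\lambda_2)(j)/\Poi(\lambda_0)(j) = \exp\bigl((\lambda_1-\lambda_0)(\lambda_2-\lambda_0)/\lambda_0\bigr)$ checks out, the four cross terms give $\chi^2(M_i,\Poi(kp_i))+1=\cosh(k\eps_i^2/p_i)$, and tensorization plus $\cosh x\le e^{x^2/2}$ gives the stated exponent. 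What the $\chi^2$ route buys is brevity and transparency; what the paper's cited machinery buys is generality (it handles fingerprint-based testers and far more delicate moment-matching constructions than a symmetric $\pm\eps_i$ perturbation).

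The one place you need to be more careful is the normalization, and the missing ingredient that makes your patch actually go through is a Cauchy--Schwarz step you never state. Write $Z=1+\sum_i s_i\eps_i$, $\delta^2=\sum_i\eps_i^2$. Since $\eps_i\le p_i$ and $\norm p_2\le1$,
\[
\sum_i\eps_i^2 \;=\; \sum_i\frac{\eps_i^2}{p_i}\,p_i \;\le\; \Bigl(\sum_i\frac{\eps_i^4}{p_i^2}\Bigr)^{1/2}\Bigl(\sum_i p_i^2\Bigr)^{1/2} \;\le\; \Bigl(\sum_i\frac{\eps_i^4}{p_i^2}\Bigr)^{1/2},
\]
so under your choice of $k$ one has $k\delta^2\le c$. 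This is exactly what you need: it simultaneously (i) makes $Z$ concentrate in a $1\pm O(\sqrt c)$ window, (ii) guarantees that the ``total count'' channel (compare $\Poi(kZ)$ to $\Poi(k)$) leaks only $O(\sqrt c)$ in TV, and (iii), after conditioning on a typical sign vector $s$, bounds the TV between the two product-Poisson laws $\prod_i\Poi(k\tilde q_i(s))$ and $\prod_i\Poi(k\tilde q_i(s)/Z(s))$ by $O\bigl(\sqrt{k(Z-1)^2}\bigr)=O(\sqrt c)$. Note this last quantity is the same order as your unnormalized $\chi^2$ bound, not ``lower order'' as you claim; it is absorbed by shrinking the constant $c$, not by being negligible. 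With those points made explicit --- and with the minor typo fixed in your final sentence, where you want to compare $\Poi(k\tilde q_i)$ against $\Poi(k\tilde q_i/Z)$ rather than $\Poi(kq_i)$ against $\Poi(kq_i/Z)$ --- the argument is sound.
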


The following proposition establishes the lower bound of Theorem~\ref{thm:l2}, showing the optimality of our $\ell_2$ tester.  Note that if $\max_i p_i\le b$ and $\max_i q_i \le b,$ then $||p-q||_2 \le \sqrt{2b},$ hence the testing problem is trivial unless $\eps \le \sqrt{2b}.$

\begin{proposition}
For any $b \in [0,1],$ and $\eps \le \sqrt{b},$  there exists a distribution $p_b$ and a family of distributions $T_{p,\eps}$ such that for a $q \leftarrow T$ chosen uniformly at random, the following hold:
\begin{itemize}
\item $||p||_2^2 \in [b/2,b]$ and $\max_i p_i \in [b/2,b]$ and with probability at least $1-o(1),$  $||q||_2^2 \in [b/2,b]$ and $\max_i q_i \in [b/2,b].$
\item With probability at least $1-o(1)$, $||p-q||_2 \ge \eps/2.$
\item No algorithm can distinguish a set of $k=c \frac{\sqrt{b}}{\eps^2}$ samples from $q$ from a set drawn from $p$ with probability of success greater than $3/4$, hence no algorithm can distinguish sets of $k$ samples drawn from the pair $(p,p)$ versus drawn from $(p,q)$ with this probability.
\end{itemize}
\end{proposition}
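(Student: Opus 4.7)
The plan is to apply Theorem~\ref{thm:hd} to a nearly-uniform distribution with uniform coordinate perturbations. Take $p$ to be the uniform distribution on $n = \lceil 2/b \rceil$ points, so that $\|p\|_2^2 = \max_i p_i = 1/n \in [b/2, b]$ by construction. Define $T_{p,\eps}$ to be the family $Q_{p,\bepsilon}$ of Theorem~\ref{thm:hd} with the uniform choice $\eps_i = \eps_0 := c_0 \eps \sqrt{b}$ for a small absolute constant $c_0 \in (0,1/2]$. The admissibility condition $\eps_i \in [0,p_i]$ from Theorem~\ref{thm:hd} reduces to $c_0\eps\sqrt{b} \le 1/n$, which holds under the hypothesis $\eps \le \sqrt{b}$ for any $c_0 \le 1/2$.

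The sample-complexity lower bound is then immediate from Theorem~\ref{thm:hd}: the lower bound $c\bigl(\sum_i \eps_i^4/p_i^2\bigr)^{-1/2}$ evaluates to $c(n\eps_0^4 \cdot n^2)^{-1/2} = c(n^3 \eps_0^4)^{-1/2}$, and substituting $n = \Theta(1/b)$ and $\eps_0 = \Theta(\eps\sqrt{b})$ yields $n^3\eps_0^4 = \Theta(\eps^4/b)$, hence a bound of $\Theta(\sqrt{b}/\eps^2)$ as required. The third bullet (indistinguishability of $(p,p)$ versus $(p,q)$) follows from the distinguishability of $p$ versus $q$, since samples from $p$ alone can be simulated in both scenarios; the gap between $2/3$-confidence in Theorem~\ref{thm:hd} and $3/4$-confidence here is absorbed into the absolute constant $c$ in $k = c\sqrt{b}/\eps^2$.

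Finally, I would verify the distance and range bullets by direct expansion. Writing $q = q'/Z$ with $q'_i = p_i + \sigma_i \eps_0$, $\sigma_i$ i.i.d.\ uniform on $\{\pm 1\}$, and $Z = 1 + \eps_0 S$ for $S = \sum_i \sigma_i$, Chebyshev's inequality on $S$ (which has mean $0$ and variance $n$) gives $|Z-1| = O(\eps_0 \sqrt{n}) = O(\eps)$ with probability $1-o(1)$. Using $\sum_i p_i \sigma_i = S/n$, a short algebraic manipulation yields
\[ \|p-q\|_2^2 \;=\; \frac{1}{Z^2}\Bigl(n\eps_0^2 - (Z-1)^2/n\Bigr) \;=\; \Theta(c_0^2 \eps^2) \]
with probability $1-o(1)$, which for appropriately chosen $c_0$ gives $\|p-q\|_2 \ge \eps/2$. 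Similarly, expanding $\|q\|_2^2 = (1/n + 2\eps_0 S/n + n\eps_0^2)/Z^2$ and $\max_i q_i = (1/n + \eps_0)/Z$, the random and deterministic corrections are $O(\eps \cdot b)$ and $O(\eps^2)$ respectively, both of which are $o(b)$ for small enough $c_0$, keeping both quantities inside $[b/2, b]$ with probability $1-o(1)$.

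The main technical obstacle is the tension between choosing $c_0$ large enough to ensure $\|p-q\|_2 \ge \eps/2$ and small enough to keep $\|q\|_2^2$ and $\max_i q_i$ within $[b/2, b]$ after the additive $O(c_0 b)$ shifts coming from $n\eps_0^2$, the normalization by $Z$, and the $\max$ of $p_i + \eps_0$. This is resolved by picking $n$ so that $1/n$ lies strictly in the interior of $[b/2, b]$, leaving enough slack to absorb the $O(\eps) = O(\sqrt{b})$ fluctuations, and by taking $c_0$ to be an absolute constant small enough to make the quantitative bounds balance.
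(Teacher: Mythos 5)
Your approach is the same as the paper's: apply \cref{thm:hd} to a near-uniform distribution on $\Theta(1/b)$ points with uniform perturbations $\eps_i=\Theta(\eps\sqrt b)$, read off the $c(\sum_i\eps_i^4/p_i^2)^{-1/2}=\Theta(\sqrt b/\eps^2)$ lower bound, and observe that distinguishing $(p,p)$ from $(p,q)$ subsumes distinguishing $p$ from $q$. (The paper takes exactly $1/b$ points with $p_i=b$ and $\eps_i=\eps\sqrt b$ and simply asserts the first two bullets hold; your explicit scaling constant $c_0$ and your expansion of $Z$, $\|p-q\|_2^2$, $\|q\|_2^2$ and $\max_i q_i$ supply the verification the paper elides, and in fact the $c_0<1/2$ slack is genuinely needed when $\eps$ is close to $\sqrt b$, since with the paper's $\eps_i=\eps\sqrt b$ one gets $\max_i q_i\approx b+\eps\sqrt b$ which can exceed $b$.)

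Two small slips, both easily patched: (i) with $n=\lceil 2/b\rceil$ you get $1/n\le b/2$, typically strictly, so $\|p\|_2^2=\max_i p_i=1/n$ falls just \emph{below} $[b/2,b]$; you want $n=\lfloor 2/b\rfloor$ (or choose $1/n$ slightly interior to the interval, as you note later). (ii) Chebyshev on $S$ gives $|S|=O(\sqrt n)$ only with constant probability, not $1-o(1)$; to get $1-o(1)$ use Hoeffding (or Chebyshev with a slowly diverging threshold), which yields $|S|=O(\sqrt n\cdot\omega(1))$ for any $\omega(1)\to\infty$, and then all that is actually needed downstream is $|Z-1|=o(1)$ and $(Z-1)^2/n=o(n\eps_0^2)$, both of which follow since $|S|\ll n$.
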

\begin{proof}
Assume for the sake of clarity that $1/b$ is an integer. The proof follows from applying Theorem~\ref{thm:hd} to the distribution $p$ consisting of $1/b$ domain elements that each occur with probability $b$, and setting $\eps_i = \eps \sqrt{b}.$  Letting $Q$ be the family of distributions defined in Theorem~\ref{thm:hd} associated to $p$ and the $\eps_i$'s, note that with probability $1-o(1)$ it is the case that the first and second conditions in the proposition statement are satisfied.  Additionally, the theorem guarantees that $p$ cannot be distinguished with probability $>2/3$ from such a $q$ given a sample of size $m$ provided that $m < c \left(\sum_i \frac{\eps_i^4}{p_i^2} \right)^{-1/2} = c\frac{\sqrt{b}}{\eps^2}.$

Given an algorithm that could distinguish, with probability at least $3/4>2/3+o(1)$, whether $||p'-q'||_2=0$ versus $||p'-q'||_2 \ge \eps/2$,  using $m=O(\sqrt{b}/{\eps^2})$ samples drawn from each of $p',q'$, one could use it to perform the above (impossible) task of distinguishing with probability greater than $2/3$ whether a set of samples was drawn from $p$, versus a random $q \leftarrow Q$ by running the hypothetical $\ell_2$ tester on the set of samples, and a set drawn from $p$.
\end{proof}

\iffalse
\begin{proposition}
  When $\eps^4 \geq b^3$, at least $\Omega(\sqrt b/\eps^2)$ samples are needed
  to distinguish $0$-vs-$\eps$ $\ell_2$ norm, for distributions with sup-norm
  $b$.
\end{proposition}

\begin{proof}
  Let $p$ and $q$ be uniform distributions on subsets of cardinality $1/b$ such that
  $\norm{p-q}_2 \asymp \eps$.
  Formally,
  \[ p = b\indic A + b\indic B \qquad q = b\indic A + b\indic C \]
  where $A$, $B$ and $C$ are disjoint subsets of size $(b-\eps^2)/b^2$,
  $\eps^2/b^2$ and $\eps^2/b^2$.
  Let $k = c\sqrt b/\eps^2$ for a sufficiently small constant $0 < c < 1$, so
  that $\norm p_\infty = \norm q_\infty = b \leq \frac 1{1000k}$ (this is where
  we use the assumption that $\eps^4 \geq b^3$).

  Let $(p_1^+,p_2^+) = (p,p)$ and $(p_1^-,p_2^-) = (p,q)$, so that they have
  $(k,k)$-based moments
  \[ m^+(r,s) = k^tb^{t-1} \qquad m^-(r,s) = k^t(b-\eps^2)b^{t-2} \]
  for $r,s\geq 1$, where $t = r+s$.
  Consider the quantity
  \[ \frac {\abs{m^+(r,s)-m^-(r,s)}}{\sqrt{1+\max\{m^+(r,s),m^-(r,s)\}}} =
    \frac{k^t\eps^2b^{t-2}}{\sqrt{1+k^tb^{t-1}}} . \]
  For $t \geq 2$, it is at most $k^{\nicefrac t2}\eps^2b^{\nicefrac
    t2-\nicefrac 32} \leq c^{\nicefrac t2}$.
  Hence the sum of moments is bounded from above by
  \[ \sum_{t\geq 2} tc^{t/2} \ll c . \]
  By Theorem~\ref{thm:wt} we conclude that the distribution pairs $(p,p)$ and $(p,q)$ are
  indistinguishable.
\end{proof}
\fi

\bigskip

\bibliographystyle{alpha}

\bibliography{allrefs}

%\newpage

\appendix

\section*{Appendix} \label{sec:ap}

\section{An \new{$O(n^{2/3}/\eps^2)$} $\ell_1$-tester} \label{ssec:improved-l1}

In this section, we show how we to obtain an
$\ell_1$ closeness tester with sample complexity  \new{$O(n^{2/3}/\eps^2)$}, by using essentially
the same approach as~\cite{BFR+:00}.

Recall that the $\ell_1$ closeness tester in~\cite{BFR+:00} proceeds in two steps:
In the first step, it ``filters'' the elements of $p$ and $q$ that are ``$b$-heavy'', i.e.,
have probability mass at least $b$ -- for an appropriate value of $b$.
(This step essentially amounts to {\em learning} the heavy
parts of $p$ and $q$.) In the second step, it uses an $\ell_2$ closeness
tester to test closeness of the ``light'' parts of $p$ and $q$. (Note that in the second step
the $\ell_2$ tester needs to be called with error parameter $\eps/\sqrt{n}$.)

Our improvement over~\cite{BFR+:00} is two fold: First, we perform the first step
(learning) in a more efficient way \new{(using a different algorithm)}. Roughly, this improvement
allows us to save a $\log n$ factor in the sample complexity.
Second, we apply our optimal $\ell_2$ tester in the second step.

Regarding the first step, note that
the heavy part of $p$ and $q$ has support size at most $2/b$. \new{Roughly, we show that the heavy part
can be learned to $\ell_1$ error $\eps$ using $O( (1/b)/\eps^2)$ samples (which is the best possible) -- without knowing a priori
which elements are heavy versus light. The basic idea to achieve this is
as follows: rather than inferring {\em all} the heavy elements (which inherently incurs an extra
$\log(1/b)$ factor in sample complexity, due to coupon collector's problem), a
small fraction of heavy elements are allowed to be undetected; this
modification requires a more involved calculation for heavy elements and a
relaxed definition for light elements.

The first step of our $\ell_1$ test uses $s_1 = O( (1/b)/\eps^2)$ samples and the second step
uses $s_2 = O(\sqrt{b}/\tilde{\eps}^2)$ samples, where $\tilde{\eps} = \eps/\sqrt{n}$. The overall sample
complexity is $s_1+s_2$, which is minimized for $b = \Theta(n^{-2/3})$ for a total sample complexity of $O(n^{2/3}/\eps^2)$.
We remark that since the sample complexity of each step is individually optimal, our achieved bound seems to be the best that could possibly be achieved via this reduction-based approach, supporting the view that, in some sense, the more direct approach of Section~\ref{sec:L1} is necessary to achieve the optimal dependence on $\eps$.}

In the following subsections we provide the details of the algorithm and its analysis.

We start with the following definition:
\begin{definition}
  A distribution $p$ is \emph{$(b,C)$-bounded} if $\norm p_2^2 \leq Cb$.
\end{definition}

\subsection{Heavy elements.}
We denote by $\hat p$ (resp. $\hat q$) the empirical distribution obtained after taking $m$ independent samples from $p$ (resp. $q$).
We classify elements into the following subsets:
\begin{itemize}
  \item Observed heavy $H(\hat p) = \{i\mid \hat p_i \geq b\}$ versus observed
    light $L(\hat p) = \{i\mid \hat p_i < b\}$.
  \item Truly heavy $\overline H(p) = \{i \mid p_i \geq b/2\}$ versus truly
    light $\overline L(p) = \{i\mid p_i < b/2\}$.
\end{itemize}
(Note the threshold for the observed distribution is $b$, while for the true distribution is $b/2$.)

Consider the random variables
\begin{align*}
  D_i &= \abs{\hat p_i - \hat q_i} - \abs{p_i - q_i}, \quad
  D(A) = \bigabs{\sum_{i\in A} D_i} .
\end{align*}
We sometimes write $D(AB)$ for $D(A\cap B)$.

We will also use the shorthand $\hat H = H(\hat p)\cup H(\hat q)$.
We want to show that $$\norm{\hat p - \hat q}_{H(\hat p)\cup H(\hat q)} \approx_\eps
\norm{p - q}_{H(\hat p)\cup H(\hat q)}$$ with high probability. To do this, we use the bound
\begin{equation}
  \label{errorbound}
  D(\hat H) \leq D(\hat H\overline H(p)\overline H(q)) + D(\hat H\overline
  H(p)\overline L(q)) + D(\hat H\overline L(p)\overline H(q)) + D(\hat
  H\overline L(p)\overline L(q)) .
\end{equation}

The first three terms on RHS of \eqref{errorbound} will be bounded by
Corollary \ref{heavybound} below. We start with the following simple claim:

%\begin{fact}
%  \label{absexpect}
%  For any random variable $X$, $\E\abs X \geq \abs{\E X}$.
%\end{fact}

%\begin{proof}
%  We can assume $\E X \geq 0$ (if not, redefine $X$ as $-X$).
%  Since $\abs X\geq X$, we have $\E\abs X \geq \E X = \abs{\E X}$.
%\end{proof}
\begin{claim}
  \label{diffvar}
  For any $i\in \domain n$,
  \begin{equation}
    \label{singlevariance}
    \E [D_i^2] \leq \frac{p_i+q_i}m.
  \end{equation}
\end{claim}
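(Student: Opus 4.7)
The plan is to use the reverse triangle inequality to reduce the claim to a variance computation. Specifically, writing $V_i = \hat p_i - \hat q_i$ and $v_i = p_i - q_i$, the reverse triangle inequality $\bigl| |V_i| - |v_i| \bigr| \leq |V_i - v_i|$ gives
\[
D_i^2 \;=\; \bigl(|V_i| - |v_i|\bigr)^2 \;\leq\; (V_i - v_i)^2.
\]
Taking expectations, since $\E[V_i] = v_i$, this bounds $\E[D_i^2]$ by $\Var[V_i] = \Var[\hat p_i] + \Var[\hat q_i]$, using independence of the samples from $p$ and $q$.

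The remaining step is to bound each individual variance. Under the Poissonization convention explicitly adopted earlier in the paper, the count $X_i$ of element $i$ in the $p$-sample is distributed as $\Poi(m p_i)$, so $\Var[\hat p_i] = \Var[X_i/m] = p_i/m$, and similarly $\Var[\hat q_i] = q_i/m$. Adding these two contributions yields $\E[D_i^2] \leq (p_i + q_i)/m$, as claimed. (If one preferred the non-Poissonized binomial model, the same bound holds since $\Var[X_i] = m p_i(1-p_i) \leq m p_i$.)

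I do not expect any obstacle here: the entire argument is two lines, relying only on the reverse triangle inequality and the standard variance of a Poisson count. The role of this claim in the larger proof is to feed a per-element second-moment bound into the decomposition \eqref{errorbound}, where it will be summed (via Cauchy--Schwarz or a union bound over $\hat H$) to control $D(\hat H \,\overline H(p)\,\overline H(q))$ and the mixed heavy/light terms in the subsequent corollary.
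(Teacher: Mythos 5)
Your proof is correct and arrives at the same intermediate bound $\E[D_i^2] \le \Var[\hat p_i - \hat q_i]$ as the paper, just via the pointwise reverse triangle inequality $(|a|-|b|)^2 \le (a-b)^2$ rather than the paper's route of expanding $\E[D_i^2]$ and dropping the nonnegative term $2|p_i-q_i|\left(\E|\hat p_i - \hat q_i| - |p_i-q_i|\right)$ using Jensen's inequality $\E|X| \ge |\E X|$. One small framing correction: this appendix section works in the plain multinomial model (the paper's variance is $\frac{p_i(1-p_i)+q_i(1-q_i)}{m}$), not under the Poissonization convention you invoke by default, but your parenthetical remark already covers the binomial case, so the bound holds regardless.
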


\begin{proof}
  Expand the LHS of \eqref{singlevariance} as
  \[ \E(\hat p_i-\hat q_i)^2 - 2\abs{p_i-q_i} \E\abs{\hat p_i-\hat q_i} +
    \abs{p_i-q_i}^2. \]
  Since
  \[ \E(\hat p_i - \hat q_i)^2 = \Var[\hat p_i - \hat q_i] + (\E[p_i-q_i])^2 =
    \frac{p_i(1-p_i)+q_i(1-q_i)}m + \abs{p_i-q_i}^2, \]
  the LHS of \eqref{singlevariance} is at most
  \[ \frac{p_i + q_i}m - 2\abs{p_i-q_i}(\E\abs{\hat p_i-\hat q_i} -
    \abs{p_i-q_i}) . \]
  The result follows by the elementary fact $\E |X| \ge |\E X|$ applied to $X = \hat p_i - \hat
  q_i$.
\end{proof}

\begin{corollary}
  \label{heavybound}
  If we use $m \geq 4/(\eps^2 b\delta)$ samples, then for any (possibly random)
  $H\subseteq \overline H(p)$, we have
  \[ D(H) \leq \eps \]
  except with probability $\delta$.
\end{corollary}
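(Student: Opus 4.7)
The plan is to bound $D(H)$ by a quantity that only depends on the deterministic set $\overline H(p)$ (not on the random sample-dependent $H$), and then apply Markov's inequality to a suitable second-moment estimate together with Cauchy--Schwarz, exploiting the cardinality bound $|\overline H(p)| \le 2/b$ that follows from $p_i \ge b/2$ on $\overline H(p)$.

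Concretely, first I would observe that for any $H\subseteq \overline H(p)$,
\[
D(H) \;=\; \bigabs{\sum_{i\in H} D_i} \;\le\; \sum_{i\in H} |D_i| \;\le\; \sum_{i\in \overline H(p)} |D_i|,
\]
which removes the dependence of the bounding expression on the random choice of $H$. Then, by Cauchy--Schwarz together with $|\overline H(p)|\le 2/b$,
\[
\sum_{i\in \overline H(p)} |D_i| \;\le\; \sqrt{|\overline H(p)|}\,\sqrt{\sum_{i\in \overline H(p)} D_i^2} \;\le\; \sqrt{\tfrac{2}{b}}\,\sqrt{\sum_{i\in \overline H(p)} D_i^2}.
\]

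Next I would control the expected second moment using Claim~\ref{diffvar}: $\E[D_i^2]\le (p_i+q_i)/m$, and since $\sum_i (p_i+q_i) = 2$,
\[
\E\Bigl[\sum_{i\in \overline H(p)} D_i^2\Bigr] \;\le\; \sum_{i\in \overline H(p)} \frac{p_i+q_i}{m} \;\le\; \frac{2}{m}.
\]
Markov's inequality then yields $\sum_{i\in\overline H(p)} D_i^2 \le 2/(m\delta)$ except with probability $\delta$.

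Combining the two displays, except with probability $\delta$,
\[
D(H) \;\le\; \sqrt{\tfrac{2}{b}}\cdot\sqrt{\tfrac{2}{m\delta}} \;=\; \frac{2}{\sqrt{mb\delta}},
\]
which is at most $\eps$ precisely when $m\ge 4/(\eps^2 b\delta)$, giving the claimed bound. There is no real obstacle here beyond setting up the bound so that $H$ is replaced by the deterministic superset $\overline H(p)$ before taking any expectations; once this is done, everything reduces to a one-line second-moment argument.
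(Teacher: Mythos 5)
Your proof is correct and follows essentially the same route as the paper's: pass from $H$ to the deterministic superset $\overline H(p)$, apply Cauchy--Schwarz with $|\overline H(p)|\le 2/b$, bound $\E\sum_i D_i^2 \le 2/m$ via Claim~\ref{diffvar}, and finish with Markov. The only cosmetic difference is that you apply Markov to $\sum_{i\in\overline H(p)} D_i^2$ while the paper applies it directly to $D(H)^2$; your version makes the handling of the random set $H$ slightly more explicit, but the estimates and constants are identical.
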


\begin{proof}
  By Cauchy--Schwarz,
  \[ D(H)^2 = \left( \sum_{i\in H} D_i \right)^2 \leq \size{\overline H(p)}
    \sum_{i\in \overline H(p)} D_i^2 . \]
  Now we take expectation on both sides.
  Since $\sum_i \E[D_i^2] \leq \sum_i (p_i+q_i)/m \leq 2/m$, and
  $\size{\overline H(p)} \leq 2/b$, we have $\E[D(H)^2] \leq \eps^2\delta$.
  Hence
  \[ \Pr[D(H) \geq \eps] = \Pr[D(H)^2 \geq \eps^2] \leq \delta \]
  by Markov's inequality.
\end{proof}

We bound the last term on the RHS of \eqref{errorbound} by
\begin{equation}
  \label{HLLbound}
  D(\hat HL(p)L(q)) \leq D(H(\hat p)H(\hat q)\overline L(p)\overline L(q)) +
  D(H(\hat p)L(\hat q)\overline L(p)\overline L(q)) + D(L(\hat p)H(\hat
  q)\overline L(p)\overline L(q)) .
\end{equation}
The RHS will be bounded by Corollaries \ref{HHLLbound} and \ref{HLLLbound}
below.

\begin{claim}
  For any $p_i \leq b/2$, any $t \geq 1$, with $m = 1/(\eps^2 b)$ samples,
  \begin{equation}
    \label{light}
    \Pr[\hat p_i \geq t b] \ll \frac{\eps^2}{t^2} \frac{p_i}b .
  \end{equation}
\end{claim}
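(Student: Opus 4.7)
The plan is to apply Chebyshev's inequality directly to the Poissonized count $X_i$, observing that the threshold $tb$ sits a constant fraction of the way above the Poisson mean $mp_i$, so that a second-moment bound already delivers the claimed $\eps^2/t^2$ factor.

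Concretely, under Poissonization $X_i \sim \Poi(mp_i)$, so $\E X_i = \Var X_i = mp_i = p_i/(\eps^2 b)$. The event $\{\hat p_i \ge tb\}$ is $\{X_i \ge tbm\}$, and the threshold is $tbm = t/\eps^2$. The first thing I would check is how far this threshold is above the mean: since $p_i \le b/2$, we have $\E X_i \le 1/(2\eps^2)$, so
\[ tbm - \E X_i \;\ge\; \frac{t}{\eps^2} - \frac{1}{2\eps^2} \;=\; \frac{2t-1}{2\eps^2} \;\ge\; \frac{t}{2\eps^2}, \]
using $2t - 1 \ge t$ for $t \ge 1$. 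Thus the threshold lies at least $t/(2\eps^2)$ above the mean.

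Next, I would apply Chebyshev's inequality to $X_i$:
\[ \Pr[\hat p_i \ge tb] \;=\; \Pr\!\left[X_i - \E X_i \ge \tfrac{t}{2\eps^2}\right] \;\le\; \frac{\Var X_i}{(t/(2\eps^2))^2} \;=\; \frac{mp_i \cdot 4\eps^4}{t^2} \;=\; \frac{4\eps^2}{t^2}\cdot \frac{p_i}{b}, \]
where in the last step I substituted $m = 1/(\eps^2 b)$. This gives exactly the stated bound (with absolute constant $4$), so interpreting ``$\ll$'' in the standard up-to-constants sense, the claim follows.

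There is no real obstacle: the only thing to verify is that the gap between the threshold and the mean really is a constant fraction of the threshold itself, which is immediate from $p_i \le b/2$ and $t \ge 1$. Note in particular that this is precisely why the hypothesis strengthens the cutoff from $b$ (in the definition of observed light) to $b/2$ (in the definition of truly light)—this separation is what keeps $tbm - \E X_i$ comparable to $tbm$ even when $t=1$. A Chernoff/Bennett tail bound would give a stronger conclusion, but is unnecessary, since a second-moment bound already gives the sharp $\eps^2/t^2$ scaling needed to sum over the truly-light support in the subsequent applications of the claim (Corollaries \ref{HHLLbound} and \ref{HLLLbound}).
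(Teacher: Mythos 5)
Your proof is correct and takes essentially the same route as the paper's: Chebyshev's inequality applied to the observed count, using $p_i\le b/2$ and $t\ge 1$ to keep the threshold $tb$ at least $tb/2$ above the mean. The only cosmetic differences are that the paper treats $\hat p_i$ as $\Bin(m,p_i)/m$ rather than Poissonizing (both have variance at most $p_i/m$, so the bound is the same), and the ``$=$'' in your final display should be ``$\le$'' since $\{X_i\ge tbm\}$ is contained in, not equal to, $\{X_i-\E X_i\ge t/(2\eps^2)\}$.
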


\begin{proof}
  Note that $\hat p_i$ has distribution $\Bin(m,p_i)/m$, so by Chebyshev's inequality,
  \[ \Pr[\hat p_i \geq tb] \leq \Pr[\abs{\hat p_i - p_i} \geq tb/2] \leq
    \frac{\Var[\hat p_i]}{(tb/2)^2} \leq \frac{4p_i}{m(tb)^2} =
    \frac{4\eps^2}{t^2} \frac{p_i}b .  \qedhere \]
\end{proof}

\begin{lemma}
  \label{lighthp}
  For any $\delta > 0$, for any $\eps \ll \delta$, using $m \gg 1/(\eps^2 b)$
  samples,
  \[ \norm{\hat p}_{\overline L(p)H(\hat p)} \leq \eps \]
  except with probability $\delta$.
\end{lemma}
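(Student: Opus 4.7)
The plan is to apply Markov's inequality, so I would first show $\E[\norm{\hat p}_{\overline L(p) H(\hat p)}] = O(\eps^2)$. By linearity of expectation, this quantity equals $\sum_{i : p_i \leq b/2} \E[\hat p_i \cdot \indic{\hat p_i \geq b}]$, so it suffices to bound the contribution from each truly light index $i$ separately and then sum over $i$.

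For each such $i$, I would use a dyadic decomposition in the value of $\hat p_i$:
\[ \E[\hat p_i \cdot \indic{\hat p_i \geq b}] \;\leq\; \sum_{k \geq 0} 2^{k+1} b \cdot \Pr\!\left[\hat p_i \geq 2^k b\right]. \]
The preceding claim (equation \eqref{light}), applied with $t = 2^k \geq 1$, bounds each probability by $O(\eps^2 / 4^k) \cdot (p_i / b)$. The factor of $b$ cancels with the $b$ in the outer sum, and the resulting series in $k$ is geometric, summing to $O(\eps^2) \cdot p_i$. Since $\sum_i p_i \leq 1$, this yields $\E[\norm{\hat p}_{\overline L(p) H(\hat p)}] = O(\eps^2)$. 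Markov's inequality then gives $\Pr[\norm{\hat p}_{\overline L(p) H(\hat p)} \geq \eps] \leq O(\eps)$, which is at most $\delta$ under the hypothesis $\eps \ll \delta$.

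There is no real obstacle here; the only point to verify carefully is that the stronger hypothesis $m \gg 1/(\eps^2 b)$ (rather than $m = 1/(\eps^2 b)$) only shrinks the hidden constant in the Chebyshev bound underlying \eqref{light}, so by taking $m$ a sufficiently large multiple of $1/(\eps^2 b)$ we can make the constant in the final $O(\eps)$ as small as we need to absorb it into $\delta$. The argument is fundamentally a second-moment calculation packaged as a dyadic layer-cake decomposition, and the rapid $1/4^k$ decay from Chebyshev is what makes the tail contributions negligible.
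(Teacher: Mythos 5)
Your proposal is correct and follows essentially the same route as the paper: a dyadic (layer-cake) decomposition of $\hat p_i$, the Chebyshev-based tail bound from the preceding claim applied at each scale $t = 2^k$, summation of the resulting geometric series to get $\E[\norm{\hat p}_{\overline L(p)H(\hat p)}] = O(\eps^2)$, and Markov's inequality to finish. Your observation that increasing $m$ beyond $1/(\eps^2 b)$ only improves the Chebyshev constant, so the final $O(\eps)$ can be absorbed into $\delta$, is exactly the role of the hypotheses $m \gg 1/(\eps^2 b)$ and $\eps \ll \delta$ in the paper's argument.
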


\begin{proof}
  \begin{align*}
    \E \norm{\hat p}_{\overline L(p)\cap H(\hat p)} &= \sum_{i\in
      \overline L(p)} \E [\hat p_i \cdot \indic{p\geq b}] = \sum_{i\in
      \overline L(p)} \sum_{j\geq 0} \E[\hat p_i \cdot \indic{2^j b \leq p_i
      \leq 2^{j+1}b}] \\
    &\leq \sum_{i\in \overline L(p)}\sum_{j\geq 0} 2^{j+1}b \cdot \Pr[\hat
    p_i\geq 2^jb] \\
    &\stackrel{\eqref{light}}\leq \sum_{i\in \overline L(p)} \frac {C \eps^2
      p_i}b \sum_{j\geq 0} \frac{2^{j+1}b}{2^{2j}} \leq 4C\eps^2 .
  \end{align*}
  By Markov's inequality,
  \[ \Pr\left[ \norm{\hat p}_{\overline L(p)H(\hat p)} \geq \eps \right] \leq
    \frac{4C \eps^2}\eps = 4C \eps \leq \delta. \qedhere \]
  %\[ \E \norm{\hat p-p}_{\overline L(p)\cap H(\hat p)} \ll \eps^2. \]
\end{proof}

\begin{corollary}
  \label{HHLLbound}
  For any $\delta > 0$, any $\eps \ll \delta$, using $m \gg 1/(\eps^2 b)$
  samples,
  \[ D(H(\hat p)H(\hat q)\overline L(p)\overline L(q)) \leq \eps , \]
  except with probability $\delta$.
\end{corollary}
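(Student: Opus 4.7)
The plan is to reduce this corollary directly to two applications of Lemma~\ref{lighthp} (one for $p$ and one for $q$) via a short triangle-inequality argument. Let $A = H(\hat p) \cap H(\hat q) \cap \overline L(p) \cap \overline L(q)$. The key observation is that on $A$ we simultaneously have $p_i, q_i < b/2$ and $\hat p_i, \hat q_i \geq b$, so the true probabilities are dominated (up to a constant) by the empirical ones, and every term involving $p_i$ or $q_i$ in the bound on $|D_i|$ can be absorbed into the corresponding empirical mass.

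First I would apply the triangle inequality to the whole sum, writing $D(A) \leq \sum_{i \in A} |D_i|$, and then bound each summand by
\[ |D_i| = \bigl||\hat p_i - \hat q_i| - |p_i - q_i|\bigr| \leq |\hat p_i - \hat q_i| + |p_i - q_i| \leq \hat p_i + \hat q_i + p_i + q_i. \]
For $i \in A$ we have $p_i + q_i < b \leq \hat p_i + \hat q_i$, so $|D_i| \leq 2(\hat p_i + \hat q_i)$. Summing and using $A \subseteq H(\hat p) \cap \overline L(p)$ as well as $A \subseteq H(\hat q) \cap \overline L(q)$ yields
\[ D(A) \leq 2 \sum_{i \in A} \hat p_i + 2 \sum_{i \in A} \hat q_i \leq 2\,\|\hat p\|_{\overline L(p) H(\hat p)} + 2\,\|\hat q\|_{\overline L(q) H(\hat q)}. \]

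Finally, I would invoke Lemma~\ref{lighthp} twice---once for $p$ and once for $q$---with accuracy parameter $\eps/4$ and failure parameter $\delta/2$; each application bounds the corresponding norm by $\eps/4$ except with probability $\delta/2$ using $m \gg 1/(\eps^2 b)$ samples (the factor of $4$ is absorbed into the implicit constant). A union bound then gives $D(A) \leq \eps$ except with probability $\delta$, at the same sample complexity, as claimed.

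There is no real obstacle here: the only content beyond Lemma~\ref{lighthp} is the observation that on $A$ the true probabilities are pointwise dominated by $b$ and hence by the empirical probabilities, which is exactly what lets a crude triangle-inequality expansion of $|D_i|$ collapse onto the two quantities already controlled by Lemma~\ref{lighthp}.
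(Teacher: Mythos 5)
Your proof is correct, and it rests on the same key ingredient as the paper---two applications of Lemma~\ref{lighthp} controlling $\|\hat p\|_{\overline L(p)H(\hat p)}$ and $\|\hat q\|_{\overline L(q)H(\hat q)}$. The decomposition is mildly different: you bound $D(A) \leq \sum_{i\in A}|D_i|$ and use the pointwise domination $p_i + q_i < b \leq \hat p_i + \hat q_i$ on $A$ to absorb the true probabilities, giving a constant-factor-cruder bound $2\|\hat p\|_{\overline L(p)H(\hat p)} + 2\|\hat q\|_{\overline L(q)H(\hat q)}$; the paper instead writes the three-way triangle inequality $\norm{p-q}_A \leq \norm{\hat p - p}_{\overline L(p)H(\hat p)} + \norm{\hat q - q}_{\overline L(q)H(\hat q)} + \norm{\hat p - \hat q}_A$ (plus the symmetric reverse inequality) and uses the same pointwise domination to bound $|\hat p_i - p_i| \leq \hat p_i$. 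Your version is structurally simpler in that it avoids the separate ``reverse'' argument, at the cost of a factor of two that is harmlessly absorbed into the implicit constants; both are essentially the same argument.
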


\begin{proof}
  By triangle inequality,
  \[ \norm{p-q}_{H(\hat p)H(\hat q)\overline L(p)\overline L(q)} \leq
    \norm{\hat p-p}_{\overline L(p)H(\hat p)} + \norm{\hat q-q}_{\overline
      L(q)H(\hat q)} + \norm{\hat p-\hat q}_{H(\hat p)H(\hat q)\overline
      L(p)\overline L(q)} . \]
  The first two terms on the RHS are dominated by $\norm{\hat p}_{\overline
    L(p)H(\hat p)}$ and $\norm{\hat q}_{\overline L(q)H(\hat q)}$.
  By Lemma
  \ref{lighthp},
  \[ \norm{p-q}_{{H(\hat p)H(\hat q)\overline L(p)\overline L(q)}} \leq
    \norm{\hat p-\hat q}_{{H(\hat p)H(\hat q)\overline L(p)\overline L(q)}} +
    \eps \]
  except with probability $\delta/2$.
  We also get the reverse inequality by swapping the roles of $p-q$ and $\hat
  p-\hat q$.
\end{proof}

\begin{corollary}
  \label{HLLLbound}
  For any $\delta > 0$, any $\eps \ll \delta$, using $m \gg 1/(\eps^2 b)$
  samples,
  \[ D(H(\hat p)L(\hat q)\overline L(p)\overline L(q)) \leq \eps \]
  except with probabilty $\delta$.
\end{corollary}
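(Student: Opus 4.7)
The plan is to follow the template of the proof of Corollary~\ref{HHLLbound}, with the main change being how the empirical-versus-true discrepancy is controlled on elements that are empirically light under $\hat q$. From the elementary inequality $\bigl||\hat p_i - \hat q_i| - |p_i - q_i|\bigr| \le |\hat p_i - p_i| + |\hat q_i - q_i|$, summing over $i \in A := H(\hat p)L(\hat q)\overline L(p)\overline L(q)$ yields
\[ D(A) \le \norm{\hat p - p}_A + \norm{\hat q - q}_A , \]
so it suffices to bound each summand by $\eps/2$ with probability at least $1 - \delta/2$ and apply a union bound.

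The first summand is handled exactly as in Corollary~\ref{HHLLbound}. Since $A \subseteq H(\hat p)\overline L(p)$, for every $i \in A$ we have $\hat p_i \ge b > b/2 > p_i$, so $|\hat p_i - p_i| \le \hat p_i$ and hence $\norm{\hat p - p}_A \le \norm{\hat p}_{\overline L(p)H(\hat p)}$. Lemma~\ref{lighthp} bounds the right-hand side by $\eps/2$ except with probability $\delta/2$, provided $m \gg 1/(\eps^2 b)$.

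The main obstacle, and what is new compared to Corollary~\ref{HHLLbound}, is controlling $\norm{\hat q - q}_A$. For $i \in A$ both $\hat q_i$ and $q_i$ are small, so the ``domination'' trick used for $\hat p - p$ is unavailable. Instead, I will exploit that $A \subseteq H(\hat p)$ forces the deterministic size bound $|A| \le 1/b$, since each element of $H(\hat p)$ contributes at least $b$ to the total empirical mass $\sum_i \hat p_i = 1$. Cauchy--Schwarz then gives
\[ \norm{\hat q - q}_A^2 \;\le\; |A| \sum_{i \in A}(\hat q_i - q_i)^2 \;\le\; \frac{1}{b}\sum_i (\hat q_i - q_i)^2 , \]
whose right-hand side has expectation at most $1/(bm)$, which is $\ll \eps^2$ when $m \gg 1/(\eps^2 b)$. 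Markov's inequality (with the constant in ``$\gg$'' absorbing a factor of $1/\delta$, consistent with $\eps \ll \delta$ and the conventions of Lemma~\ref{lighthp}) then yields $\norm{\hat q - q}_A \le \eps/2$ except with probability $\delta/2$. A union bound across the two failure events completes the proof. The only delicate point is that $A$ is random in both $\hat p$ and $\hat q$, but the size bound $|A| \le 1/b$ is deterministic given $\hat p$, and the Cauchy--Schwarz step uses only this bound together with the unrestricted sum $\sum_i (\hat q_i - q_i)^2$, sidestepping any conditioning issues.
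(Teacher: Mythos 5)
Your proof is correct, but it takes a genuinely different (and somewhat heavier) route than the paper's.

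The paper's proof is a one-liner built on the observation that for $i\in H(\hat p)L(\hat q)\overline L(p)\overline L(q)$ \emph{both} quantities $\abs{\hat p_i-\hat q_i}$ and $\abs{p_i-q_i}$ lie in $[0,\hat p_i]$: indeed $\hat q_i < b \le \hat p_i$ gives $\abs{\hat p_i-\hat q_i}=\hat p_i-\hat q_i\in[0,\hat p_i]$, and $p_i,q_i<b/2\le\hat p_i$ gives $\abs{p_i-q_i}<\hat p_i$. Hence $\abs{D_i}\le\hat p_i$ pointwise, so $D(A)\le\norm{\hat p}_{\overline L(p)H(\hat p)}$, and the entire corollary collapses to a single invocation of Lemma~\ref{lighthp}. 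No triangle inequality, no separate treatment of $\hat q$, no Cauchy--Schwarz.

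You instead split via $\abs{D_i}\le\abs{\hat p_i-p_i}+\abs{\hat q_i-q_i}$. Your handling of the first term is essentially the paper's argument. For the second term, which the paper's pointwise bound absorbs for free, you introduce a new ingredient: the deterministic size bound $\size{A}\le\size{H(\hat p)}\le 1/b$ (since $\sum_i\hat p_i=1$), followed by Cauchy--Schwarz against the unrestricted $\sum_i(\hat q_i-q_i)^2$, whose expectation is $\le 1/m$. This is sound and correctly sidesteps the conditioning issue created by the randomness of $A$. The costs are (i) an extra step, and (ii) a $1/\delta$ factor in the sample complexity coming from the Markov step on the squared quantity, so your argument needs $m\gg 1/(\eps^2 b\delta)$ rather than the $m\gg 1/(\eps^2 b)$ claimed in the corollary. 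This is harmless for Theorem~\ref{heavyestimator} (which already has $m\gg 1/(\eps^2 b\delta)$), but it is technically weaker than the corollary as stated. On the other hand, your Cauchy--Schwarz trick is more portable: it would apply to any index set of size $O(1/b)$, whereas the paper's one-line bound exploits the specific structure $H(\hat p)L(\hat q)\overline L(p)\overline L(q)$.
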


\begin{proof}
  It is easy to see that $\abs{\abs{\hat p_i - \hat q_i} - \abs{p_i - q_i}}
  \leq \hat p_i$ for $i\in H(\hat p)L(\hat q)\overline L(p)\overline L(q)$.
  Hence $$D(H(\hat p)L(\hat q)\overline L(p)\overline L(q)) \leq \norm{\hat
    p_i}_{\overline L(p)H(\hat p)},$$ and the result follows by Lemma
  \ref{lighthp}.
\end{proof}

Applying Corollaries \ref{heavybound}, \ref{HHLLbound} and \ref{HLLLbound} to
inequalities \eqref{errorbound} and \eqref{HLLbound}, we have thus shown the
main theorem of this section.

\begin{theorem}
  \label{heavyestimator}
  For any $\delta > 0$, any $\eps \ll \delta$, using $m \gg 1/(\eps^2b\delta)$
  samples,
  \[ \norm{\hat p - \hat q}_{H(\hat p)\cup H(\hat q)} \approx_\eps \norm{p -
      q}_{H(\hat p)\cup H(\hat q)} \]
  except with probability $\delta$.
\end{theorem}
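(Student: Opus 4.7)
The plan is to combine inequalities \eqref{errorbound} and \eqref{HLLbound} with Corollaries \ref{heavybound}, \ref{HHLLbound}, and \ref{HLLLbound}, and then take a union bound over a finite collection of failure events, rescaling $\eps$ and $\delta$ by absolute constants so that the total additive error stays at most $\eps$ and the total failure probability stays at most $\delta$. By the definition of $D(\cdot)$, the target conclusion $\norm{\hat p - \hat q}_{\hat H} \approx_\eps \norm{p - q}_{\hat H}$ is literally the statement $D(\hat H) \leq O(\eps)$ with high probability, so it suffices to bound the right-hand side of \eqref{errorbound} term by term.

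The first three terms on the RHS of \eqref{errorbound} are each of the form $D(S)$, where the (possibly random) index set $S$ is contained in $\overline H(p)$ or in $\overline H(q)$. Each is bounded by $O(\eps)$ via Corollary \ref{heavybound}, applied once as stated for the two terms whose $S$ lies inside $\overline H(p)$, and once with the roles of $p$ and $q$ swapped for the term whose $S$ lies inside $\overline H(q)$. The sample complexity hypothesis $m \gg 1/(\eps^2 b \delta)$ is exactly what that corollary demands, and its ``possibly random $H$'' clause is essential here because the observed-heavy set $\hat H$ depends on the samples.

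The fourth term $D(\hat H \cap \overline L(p) \cap \overline L(q))$ is handled by the further decomposition \eqref{HLLbound} into three sub-terms: the ``both-observed-heavy'' piece is bounded by $O(\eps)$ via Corollary \ref{HHLLbound}, while the two ``mixed'' pieces are bounded by $O(\eps)$ via Corollary \ref{HLLLbound}, applied once as stated and once with $p$ and $q$ swapped. The only real bookkeeping step, and therefore the main (mild) obstacle, is to collect the at most six constituent failure events, rescale the $\eps$ and $\delta$ used in each invocation by absolute constants so that the error terms sum to at most $\eps$ and the failure probabilities sum to at most $\delta$, and then observe that this rescaling only changes the hidden constant in the $m \gg 1/(\eps^2 b \delta)$ requirement. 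Putting these pieces together via the triangle-inequality decompositions \eqref{errorbound} and \eqref{HLLbound} yields $D(\hat H) \leq O(\eps)$ with probability at least $1 - O(\delta)$, which is the claimed statement.
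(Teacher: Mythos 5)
Your proof is correct and matches the paper's own (one-sentence) proof, which likewise just applies Corollaries~\ref{heavybound}, \ref{HHLLbound}, and \ref{HLLLbound} to the decompositions \eqref{errorbound} and \eqref{HLLbound}; you have simply unpacked the bookkeeping — identifying which corollary handles each term, noting the two applications with $p$ and $q$ swapped, and observing that the $O(1)$ failure events and additive errors are absorbed by constant rescalings of $\eps$ and $\delta$ inside the $\gg$ and $\ll$ notation.
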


\subsection{Light elements.}
We now deal with the light elements.
%Our definition is a relaxation of the one in~\cite{BFR:00}.
Let $p'$ be the low-frequency distribution constructed in Step~2 of the
$\ell_1$ tester (those elements with empirical frequency at least $b$ have
their weights redistributed evenly).
It will be shown to be $(O(b),O(1))$-bounded in Theorem \ref{typicallylight} below.
\begin{theorem}
  \label{typicallylight}
  $p'$ is $(2b,O(1/\delta))$-bounded except with probability $\delta$.
\end{theorem}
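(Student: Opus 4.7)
The plan is to control $\|p'\|_2^2$ by decomposing coordinates into those empirically heavy and those empirically light, bounding the $\ell_2^2$-mass of each piece separately, and applying Markov's inequality: once we establish $\E[\|p'\|_2^2] = O(b)$, Markov gives $\|p'\|_2^2 \le 2b\cdot O(1/\delta)$ except with probability $\delta$, which is exactly the claimed $(2b, O(1/\delta))$-boundedness. The decomposition is natural because the construction of $p'$ treats $H(\hat p)$ and $L(\hat p)$ separately.

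The heavy-observed piece is deterministic and straightforward. Since every $i \in H(\hat p)$ satisfies $\hat p_i \ge b$ and $\hat p$ sums to at most $1$, we have $|H(\hat p)| \le 1/b$, and the even redistribution assigns each heavy coordinate of $p'$ a common value of order $b$, so $\sum_{i \in H(\hat p)}(p'_i)^2 \le |H(\hat p)|\cdot O(b^2) = O(b)$ uniformly in the sample.

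The light-observed piece is the substantive one. On $L(\hat p)$ the construction leaves $p'_i = p_i$, so it suffices to bound $\E\bigl[\sum_{i \in L(\hat p)} p_i^2\bigr]$. I will split the sum at the threshold $p_i = 2b$. Coordinates with $p_i \le 2b$ contribute at most $\sum_i 2b \cdot p_i \le 2b$. For coordinates with $p_i > 2b$, membership in $L(\hat p)$ forces $|\hat p_i - p_i| > p_i/2$, and Chebyshev's inequality applied to $m\hat p_i \sim \Bin(m, p_i)$ yields $\Pr[i \in L(\hat p)] \le 4/(mp_i)$. Hence the expected contribution of these tail coordinates is at most $\sum_i p_i^2 \cdot 4/(mp_i) \le 4/m$, which is $O(\eps^2 b \delta)$ and in particular $O(b)$ under the sample-size hypothesis $m \gg 1/(\eps^2 b \delta)$.

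Combining, $\E[\|p'\|_2^2] = O(b)$ and Markov's inequality completes the argument. The main subtlety is the handling of truly heavy coordinates that happen to be observed as light; using the sharp Chebyshev bound on the Binomial rather than a coupon-collector-style union bound over all heavy coordinates is exactly what removes the $\log(1/b)$ overhead and is responsible for the $\log n$ saving in the overall sample complexity compared with the filtering step of~\cite{BFR+:00}.
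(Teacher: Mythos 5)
Your proof is correct and establishes the claim, but it is worth flagging two presentational imprecisions and noting a genuine (if small) methodological difference from the paper's argument.

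First, your description of the heavy-observed coordinates is slightly off. The construction redistributes the mass of the observed-heavy elements evenly across \emph{all} $n$ coordinates, so on $i \notin \hat L$ one has $p'_i \le 1/n$, not a ``common value of order $b$''; and on $i \in \hat L$ one has $p'_i \le p_i + 1/n$ rather than $p'_i = p_i$. Both corrections are harmless — the redistribution contributes at most $n \cdot (1/n)^2 = 1/n \le b$ to $\|p'\|_2^2$, and the $+1/n$ terms are absorbed similarly — but as written the argument for the observed-heavy piece would not survive a construction that actually gave some coordinate mass of order $1$ (e.g.\ if only a single element were heavy), and the reader needs to see that the even-spreading is over $[n]$ for the bound to hold. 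Also, the paper's light/heavy cut is defined via $\hat L = L(\hat p) \cap L(\hat q)$ rather than $L(\hat p)$ alone; this only tightens your bound (more coordinates get the trivial $1/n$ cap), so it is fine, but should be mentioned.

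Second, the tail estimate is where you take a genuinely different route: you bound $\Pr[\hat p_i < b]$ for truly-heavy $i$ by Chebyshev, getting $\lesssim 1/(m p_i)$, which exactly cancels the $p_i^2$ in front, giving total contribution $O(1/m)$. The paper instead writes $p_i = x_i b$, uses a Chernoff bound $\Pr[\hat p_i < b] \le \exp(-\Omega(x_i/\eps^2))$, and observes that $x_i^{t-1}\exp(-\Omega(x_i/\eps^2))$ is $O_t(1)$ uniformly — a bound that works for every moment $t$ and hence gives $\|p'\|_t^t \ll_t b^{t-1}/\delta$, not just the $t=2$ case stated in the theorem. Your Chebyshev argument is more elementary and is tight enough for the $\ell_2$ claim actually needed, because the $1/p_i$ decay precisely offsets the quadratic growth; it would \emph{not} suffice for $t \ge 3$, where the power of $p_i$ outpaces the $1/p_i$ from Chebyshev and one really does need the exponential tail. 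So the trade-off is simplicity for the $\ell_2$ statement versus the generality that the Chernoff route provides.
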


\begin{proof}
  %Let $k = \floor{bm} = C/\eps^2$ and $H = \{i\mid p_i \geq 4b\}$.
  Let $H = \{i\mid p_i \geq 2b\}$ and $\hat L = \{i\mid \hat p_i < b\text{ and
    }\hat q_i < b\}$.
  % TODO: check floor
  We wish to bound
  \begin{equation}
    \label{expectnorm}
    \E \left[\sum_{i\in \hat L\cap H} p_i^t\right] = \sum_{i\in H} p_i^t
    \Pr[i\in \hat L]
  \end{equation}
  by $O_t(b^{t-1})$.
  Indeed, writing $p_i = x_ib$, the summand
  \[ p_i^t \Pr[i\in \hat L] \leq p_i^t \Pr[\hat p_i\leq b] = p_i b^{t-1}\cdot
    x_i^{t-1}\bin(m,p_i,<bm). \]
  The factor
  \[ x^{t-1}\bin(m,p_i,<bm) \leq x_i^{t-1} \exp\left( -\frac{Cx_i}{8\eps^2}
    \right) \]
  by a Chernoff bound and equals $O_t(1)$ uniformly in $x_i$ and $\eps$.
  Hence \eqref{expectnorm} is $O_t(b^{t-1})$.
  By Markov's inequality,
  \begin{equation}
    \label{smallheavynorm}
    \sum_{i\in \hat L\cap H} p_i^t \ll_t b^{t-1}/\delta
  \end{equation}
  except with probability $\delta$.

  Note that
  \[ p'_i \leq \left(p_i+\frac 1n\right) \indic{i\in \hat L} + \frac
    1n\indic{i\notin \hat L} , \]
  thus
  \[ \norm{p'}_t^t \leq \sum_{\hat i\in \hat L\cap H} \left(p_i+\frac
      1n\right)^t + \sum_{i\notin H} \left(p_i+\frac 1n\right)^t + \sum_{i
      \notin \hat L}\left(\frac 1n\right)^t . \]
  Together with $(r+s)^t \ll_t r^t + s^t$ and $\sum_i (1/n)^t \leq 1/n^{t-1}
  \leq b^{t-1}$, it follows that $\norm{p'}_t^t \ll_t b^{t-1}/\delta$ whenever
  \eqref{smallheavynorm} holds.
\end{proof}

\begin{theorem}
There exists an algorithm $\ell_1$-Distance-Test that, for $\eps \geq 1/\sqrt n$, uses
  $O(n^{2/3}\eps^{-2})$ samples from $p, q$ and has the following behavior: it rejects with probability
  $2/3$ when $\norm{p-q}_1 \geq \eps$, and accepts with probability
  $2/3$ when $p = q$.
\end{theorem}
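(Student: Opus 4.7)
The plan is to instantiate the two-stage filter-plus-$\ell_2$ reduction of~\cite{BFR+:00} with the optimized threshold $b = \Theta(n^{-2/3})$, plugging in Theorem~\ref{heavyestimator} in the heavy stage and Theorem~\ref{thm:l2} in the light stage. Concretely, the algorithm draws $m_1 = \Theta(1/(\eps^2 b)) = \Theta(n^{2/3}/\eps^2)$ samples from each of $p$ and $q$, forms empirical distributions $\hat p,\hat q$, and computes $S = H(\hat p)\cup H(\hat q)$, the observed-heavy set. It outputs REJECT if $\norm{\hat p - \hat q}_S \geq \eps/2$.

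Next, using the construction of $p',q'$ from Theorem~\ref{typicallylight} (redistribute the empirical mass of the elements in $S$ uniformly, so that $p'$ and $q'$ are honest distributions whose heavy elements have been "washed out"), the algorithm draws $m_2 = \Theta(\sqrt{b}/\tilde\eps^2)$ additional samples from $p$ and $q$ (using knowledge of $S$ to emulate samples from $p',q'$), where $\tilde\eps = \eps/(C\sqrt n)$ for a suitable constant $C$. By Theorem~\ref{typicallylight}, $p'$ and $q'$ are $(2b,O(1))$-bounded with constant probability, so Theorem~\ref{thm:l2} distinguishes $\norm{p'-q'}_2 \leq \tilde\eps$ from $\norm{p'-q'}_2 \geq 2\tilde\eps$ using $O(\sqrt{b}/\tilde\eps^2) = O(n^{-1/3}\cdot n/\eps^2) = O(n^{2/3}/\eps^2)$ samples. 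REJECT if the $\ell_2$ test returns "far", else ACCEPT. The total sample complexity is $m_1+m_2 = O(n^{2/3}/\eps^2)$.

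For correctness, when $p=q$ we have $\norm{p-q}_S = 0$, so by Theorem~\ref{heavyestimator} the heavy-stage estimate is at most $\eps/4$ with constant probability, and $p'=q'$ by construction so the $\ell_2$ tester does not reject; by a union bound the algorithm accepts with probability at least $2/3$. When $\norm{p-q}_1\geq \eps$, by the triangle inequality either $\norm{p-q}_S\geq\eps/2$ or $\norm{p-q}_{[n]\setminus S}\geq\eps/2$. In the first case Theorem~\ref{heavyestimator} guarantees $\norm{\hat p-\hat q}_S\geq \eps/2 - \eps/4 > \eps/4$ (after tuning the constant in $m_1$), and the heavy test rejects. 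In the second case, Cauchy--Schwarz on a support of size at most $n$ gives $\norm{p-q}_{[n]\setminus S}_2 \geq \norm{p-q}_{[n]\setminus S}_1/\sqrt n \geq \eps/(2\sqrt n)$; after a short calculation showing that the uniform redistribution in the definition of $p',q'$ perturbs the $\ell_2$ distance on $[n]\setminus S$ by at most $O(1/\sqrt n)$ (which is absorbed into the constant $C$), we get $\norm{p'-q'}_2\geq 2\tilde\eps$, and the $\ell_2$ tester rejects.

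The routine parts are the parameter arithmetic and the union bound over the two subroutines. The main obstacle (and essentially the only nontrivial step) is verifying that redistributing the observed-heavy mass does not destroy the $\ell_2$ signal when $\norm{p-q}_{[n]\setminus S}_1\geq \eps/2$: one must check that the difference $p'_i-q'_i$ on light coordinates tracks $p_i-q_i$ up to an additive term of order $1/n$, so that $\norm{p'-q'}_2^2\geq \norm{p-q}_{[n]\setminus S}_2^2 - O(1/n)\geq \eps^2/(4n) - O(1/n)$, which is $\Omega(\eps^2/n)=\Omega(\tilde\eps^2)$ precisely in the regime $\eps\geq 1/\sqrt n$ stated in the theorem. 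This explains the hypothesis on $\eps$ and is where Theorem~\ref{typicallylight} is doing real work beyond just boundedness.
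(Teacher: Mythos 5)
Your overall structure matches the paper's own proof sketch exactly: the same two-stage filter-plus-$\ell_2$ reduction with threshold $b=\Theta(n^{-2/3})$, the same use of Theorem~\ref{heavyestimator} on the observed-heavy set, Theorem~\ref{typicallylight} for boundedness, and the $\ell_2$ tester with $\tilde\eps = \Theta(\eps/\sqrt n)$, giving $O(n^{2/3}/\eps^2)$ total samples. The paper leaves its correctness argument as a one-line sketch, so your attempt to flesh it out is welcome, but the detail you single out as the ``main obstacle'' contains an arithmetic error.

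You bound the per-coordinate perturbation from the uniform redistribution by $O(1/n)$ and conclude $\norm{p'-q'}_2^2 \ge \eps^2/(4n) - O(1/n)$, claiming this is $\Omega(\eps^2/n)$ ``precisely when $\eps \ge 1/\sqrt n$.'' That inequality actually requires $\eps = \Omega(1)$, not $\eps \ge 1/\sqrt n$: subtracting $O(1/n)$ from $\eps^2/(4n)$ leaves a positive constant fraction only if $\eps^2 = \Omega(1)$. The correct argument is tighter: writing $\Delta = W_p - W_q$ for the difference in the redistributed masses, one gets $p'_i - q'_i = (p_i - q_i) + \Delta/n$ on light coordinates and $p'_i-q'_i=\Delta/n$ on heavy ones, and a direct computation gives $\norm{p'-q'}_2^2 = \norm{p-q}_{2,\hat L}^2 - \Delta^2/n$ exactly. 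Since $|\Delta| \le \norm{p-q}_S$, and the algorithm only reaches the light stage when the heavy test has passed (so $\norm{p-q}_S$ is a small constant fraction of $\eps$ by Theorem~\ref{heavyestimator}), one has $\Delta^2/n = O(\eps^2/n)$ with a small constant, which is dominated by the signal $\norm{p-q}_{2,\hat L}^2 \ge (\eps - O(\eps))^2/n$. So the conclusion survives, but via $|\Delta| = O(\eps)$ rather than $|\Delta| = O(1)$, and with careful constant tracking rather than any hypothesis on $\eps$. Relatedly, your claim that Theorem~\ref{typicallylight} is ``doing real work beyond just boundedness'' here is not right: that theorem only controls $\norm{p'}_2$, and plays no role in the perturbation bound above.
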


\begin{proof}[Proof (Sketch)]
  The algorithm proceeds as follows: We pick $b = n^{-2/3}$.
  We check if the ``$b$-heavy'' parts $H(\hat p)\cup H(\hat q)$ of $p$ and $q$ are
  $\eps/2$-far using Theorem \ref{heavyestimator}.
  We then construct light versions $p'$ and $q'$ as in~\cite{BFR+:00}; these
  distributions are $(b,O(1))$-bounded with high probability by Theorem~\ref{typicallylight}.
  Finally, we check whether they are $\eps/2$-far using Proposition~\ref{prop:ggg} (where we set $\tilde \eps = \eps/\sqrt n$).
  The number of samples we need for both Theorem \ref{heavyestimator} and
  Proposition~\ref{prop:ggg} is $O(n^{2/3}\eps^{-2})$. This completes the proof.
\end{proof}

\end{document}